\def\notes{1}
\newtheorem{thm}{Theorem}
\newtheorem{lem}[thm]{Lemma}
\newtheorem{claim}[thm]{Claim}
\newtheorem{corollary}[thm]{Corollary}
\newtheorem{defn}[thm]{Definition}
\newenvironment{theorem-repeat}[1]{\begin{trivlist}
\item[\hspace{\labelsep}{\bf\noindent Theorem~\ref{#1} }]}%
{\end{trivlist}}
\global\long\def\R{\mathbb{R}}
\global\long\def\E{\mathbb{E}}
\global\long\def\S{P}
\global\long\def\M{M}
\global\long\def\NM{\bar{\M}}
\global\long\def\barS{V\backslash \S}
\newcommand{\MIS}{MIS~}
\global\long\def\R{R}
\global\long\def\C{C}
\newcommand{\ktodo}[1]{\textcolor{violet}{Keren: #1}}
\newcommand{\zk}[1]{\textcolor{blue}{ZK: #1}}
\newcommand{\enote}[1]{\textcolor{red}{Elad: #1}}
\newcommand{\ktodo}[1]{}
\newcommand{\zk}[1]{}
\newcommand{\enote}[1]{}
\newcommand{\Gold}{G_{\mathrm{old}}}
\newcommand{\Gnew}{G_{\mathrm{new}}}
\newcommand{\ID}{\ell}
\begin{document}
\begin{titlepage}
\title{Optimal Dynamic Distributed MIS} 

\author{Keren Censor-Hillel\thanks{Department of Computer Science, Technion. Supported in part by ISF grant 1696/14. \texttt{ckeren@cs.technion.ac.il}.}
\and
Elad Haramaty\thanks{College of Computer and Information Science, Northeastern University.  Supported in part by NSF grant CCF-1319206. Part of this work was done while the author was at Yahoo Labs. \texttt{eladh@cs.technion.ac.il}.}
\and
Zohar Karnin\thanks{Yahoo Labs. \texttt{zkarnin@ymail.com}.}
}

\maketitle
\begin{abstract}
Finding a maximal independent set (MIS) in a graph is a cornerstone task in distributed computing. The local nature of an MIS allows for fast solutions in a static distributed setting, which are \emph{logarithmic} in the number of nodes or in their degrees [Luby 1986, Ghaffari 2015]. By running a (static) distributed MIS algorithm after a topology change occurs, one can easily obtain a solution with the same complexity also for the \emph{dynamic} distributed model, in which edges or nodes may be inserted or deleted.

In this paper, we take a different approach which exploits locality to the extreme, and show how to update an MIS in a dynamic distributed setting, either \emph{synchronous} or \emph{asynchronous}, with only \emph{a single adjustment}, meaning that a single node changes its output, and in a single round, in expectation. These strong guarantees hold for the \emph{complete fully dynamic} setting: we handle all cases of \emph{insertions} and \emph{deletions}, of \emph{edges} as well as \emph{nodes}, \emph{gracefully} and \emph{abruptly}. This strongly separates the static and dynamic distributed models, as super-constant lower bounds exist for computing an MIS in the former.

We prove that for any deterministic algorithm, there is a topology change that requires $n$ adjustments, thus we also strongly separate deterministic and randomized solutions.

Our results are obtained by a novel analysis of the surprisingly simple solution of carefully simulating the greedy \emph{sequential} MIS algorithm with a random ordering of the nodes.
As such, our algorithm has a direct application as a $3$-approximation algorithm for correlation clustering. This adds to the important toolbox of distributed graph decompositions, which are widely used as crucial building blocks in distributed computing.

Finally, our algorithm enjoys a useful \emph{history-independence} property, which means that the distribution of the output structure depends only on the current graph, and does not depend on the history of topology changes that constructed that graph. This means that the output cannot be chosen, or even biased, by the adversary, in case its goal is to prevent us from optimizing some objective function.
Moreover, history independent algorithms compose nicely, which allows us to obtain history independent coloring and matching algorithms, using standard reductions. 
\end{abstract}
\thispagestyle{empty}
\end{titlepage}

\section{Introduction}
Dynamic environments are very common in distributed settings, where nodes may occasionally join and leave the network, and communication links may fail and be restored.
This makes solving tasks in a dynamic distributed setting of fundamental interest: Indeed, it is a widely studied area of research, especially in the context of asynchronous self-stabilization~\cite{Dijkstra74,Schneider93,GuellatiK10,Dolev2000}, and also in the context of severe graph changes, whether arbitrary~\cite{KuhnLO10} or evolving randomly~\cite{AvinKL08}.
In this paper, we consider a dynamic distributed setting which is \emph{synchronous} and assumes topology changes with sufficient time for recovery in between, as is typically assumed in the literature on \emph{sequential dynamic algorithms}~\cite{Demetrescu2010}.

Solutions for problems from the static distributed setting translate nicely into our dynamic distributed setting, by running them in response to topology changes, in order to adjust the output~\cite{AwerbuchS88,AwerbuchV91,LenzenSW09}. This can be quite efficient especially for \emph{local} problems, such as finding a maximal independent set (MIS) in the network graph. The cornerstone MIS problem admits fast distributed solutions whose complexities are logarithmic in the size of the graph or in the degrees of the nodes~\cite{Luby86,AlonBI86,IsraelI86,Ghaffari15}.

In this paper, we exploit locality to the extreme, and present an MIS algorithm for the dynamic distributed setting, both \emph{synchronous} or \emph{asynchronous}, which requires only \emph{a single adjustment}, where the adjustment measure of an algorithm is the number of nodes that need to change their output in response to the topology change, and a single round, in expectation. These strong guarantees hold for the \emph{complete fully dynamic} setting, i.e., we handle all cases of \emph{insertions} and \emph{deletions}, of \emph{edges} as well as \emph{nodes}, \emph{gracefully} and \emph{abruptly}.\footnote{See definitions of topology changes in Section~\ref{sec:setup}.} This is a strong separation between the static and dynamic distributed models, as super-constant lower bounds exist for the static setting~\cite{Linial92,KuhnMW04}.
We further prove that for any deterministic algorithm, there is a topology change that requires $n$ adjustments, with $n$ being the number of nodes, thus we also strongly separate deterministic and randomized solutions.
Below, we overview our technique and the applications of our result.

\subsection{Our Contribution}
Our approach is surprisingly simple: We simulate the greedy \emph{sequential} algorithm for solving MIS. The greedy sequential algorithm orders the nodes and then inspects them by increasing order. A node is added to the MIS if and only if it does not have a lower-order neighbor already in the MIS. We consider \emph{random greedy}, the variant of greedy in which the order is chosen uniformly at random. Consider simulating random greedy in a dynamic environment with the following template (ignoring the model of computation/communication for the moment). Each node needs to maintain the invariant that its state depends only on the states of its neighbors with lower order, such that it is in the MIS if and only if none of its lower order neighbors are in the MIS. When a change occurs in the graph, nodes may need to change their output, perhaps more than once, until they form a new MIS. Our key technical contribution is in proving:

\begin{theorem-repeat}{thm:ES-const}
For any \emph{arbitrary} change in the graph, the expectation over all random orders, of the number of nodes that change their output in the above random greedy template is at most 1.
\end{theorem-repeat}

\textbf{The Challenge:} We denote by $\pi$ the random order of nodes, we denote by $v^*$ the only node (if any) for which the above invariant does not hold after the topology change, and we denote by $S$ the set of nodes that need to be changed in order for the invariant to hold again at all nodes. We look at $S'$, the set of nodes that would have needed to be changed if the order was as in $\pi$, except for pushing $v^*$ to be the first node in that order. The definition of $S'$ does not depend on the real order of $v^*$ in $\pi$. Therefore, we can prove that $S$ can either be equal to $S'$ if the order of $v^*$ in $\pi$ is minimal in $S'$, and empty otherwise. Now the question is, what is the probability, given $S'$, that $v^*$ is indeed its minimal order node? The answer is that if $S'$ were deterministic, i.e.\ independent of $\pi$, the probability would be $1/|S'|$. However, $S'$ is a random set and having knowledge of its members restricts $\pi$ to be non-uniform, which in turn requires a careful analysis of the required probability. To overcome this issue, we prove that the information that $S'$ gives about $\pi$ is either about the order between nodes not in $S'$, or about the order within $S'\backslash\{v^*\}$, which both do not effect the probability that $v^*$ is the minimal in $S'$.

\textbf{Distributed Implementation:}
This powerful statement of $\E[|S|] \leq 1$ directly implies that a \emph{single} adjustment is sufficient for maintaining an MIS in a dynamic setting. A direct distributed implementation of our template implies that in expectation also a single round is sufficient. This applies both to the synchronous and asynchronous models, where the number of rounds in the asynchronous model is defined as the longest path of communication.

\textbf{Obtaining $O(1)$ Broadcasts and Bits:}
In fact, in the synchronous model, it is possible to obtain an expected number of $O(1)$ \emph{broadcasts} and \emph{bits}. Here the number of broadcasts is the total number of times, over all nodes, that any node sends a $O(\log{n})$-bit broadcast message (to all of its neighbors)\footnote{We emphasize that the term broadcast is used here to indicate the more restricted setting of not being able to send different messages to different neighbors in the same round. It does not refer to a wireless setting of communication.}. Moreover, since we only need a node to know the order between itself and its neighbors, using a similar technique to that of~\cite{MetivierRSZ11}, we can obtain that in expectation, a node only needs to send a constant number of bits in each broadcast. The above holds for edge insertions and deletions, graceful node deletion, and node \emph{unmuting}, while for an abrupt deletion of a node $v^*$ we will need $O\left(\min\{\log(n), d(v^*)\}\right)$ broadcasts, and for an insertion of a node $v^*$ we will need $O(d(v^*))$ broadcasts, in expectation. 

This is done with a careful dynamic distributed implementation which guarantees that each node that changes its output does so at most $O(1)$ times, as opposed to the direct distributed implementation \footnote{This bears some similarity to the method in~\cite{Turau07}, where the number of \emph{moves} is reduced in an MIS self-stabilizing algorithm by adding a possible \emph{wait} state to the standard \emph{in MIS} and \emph{not in MIS} states.}. Hence, obtaining these broadcast and bit complexities comes at a cost of increasing the round complexity, but it remains constant (albeit not 1).
In what follow, we focus on this result since we find it intriguing that we can get as little as $O(1)$ \emph{total communication}, while paying $O(1)$ rounds instead of a single round is arguably not a big cost. 
\textbf{Matching Lower Bounds:}
We claim that any deterministic algorithm requires $n$ adjustments, which can be seen through the following example. Let $A$ be a dynamic deterministic MIS algorithm. Let $G_0$ be the complete bipartite graph over two sets of nodes of size $k$. We denote by $L$ the side of $G_0$ that is chosen to be the MIS by $A$, and we denote the other side by $R$. For every $i\in[k]$ let $G_i$ be the graph obtained after deleting $i$ nodes from $L$, and consider executing $A$ on $G_0,G_1,...,G_k$. For every $i$, since $G_i$ is a complete bipartite graph, one of the sides has to be the MIS. Since $G_k$ contains only disconnected nodes of $R$ then $R$ is the only MIS of $G_k$. This implies that after some specific change along the sequence, the side of the MIS changes from $L$ to $R$. In this topology change, \emph{all} of the nodes change their output.

This gives a strong separation between our result and deterministic algorithms.
Moreover, it shows that (1) the expected adjustment complexity of any algorithm must be at least 1, as we have a sequence of $k$ topology changes that lead to at least $k$ adjustments,  and (2) it is impossible to achieve high probability bounds that improve upon a simple Markov bound. Specifically, this explains why we obtain our result in expectation, rather than with high probability. This is because the example can be inserted into any larger graph on $n$ nodes, showing that \emph{for every} value of $k$, there exists an instance for which at least $\Omega(k)$ adjustments are needed with probability at least $1/k$.

\textbf{Approximate Correlation Clustering:} In addition to the optimal complexity guarantees, the fact that our algorithm simulates the random greedy sequential algorithm has a significant application to correlation clustering. Correlation clustering requires the nodes to be partitioned into clusters in a way that minimizes the sum of the number of edges outside clusters and the number of non neighboring pairs of nodes within clusters (that is, missing edges within clusters). Ailon et al.~\cite{AilonCN08} show that random greedy obtains a $3$-approximation for correlation clustering\footnote{In the same paper, they also provide a $2.5$ approximation based on rounding a solution of a linear program. We do not elaborate on the details of this algorithm, nor the history of the correlation clustering problem as it is outside the scope of our paper.}, by having each MIS node inducing a cluster, and each node not in the MIS belonging to the cluster induced by the smallest random ID among its MIS neighbors. This directly translates to our model, by having the nodes know that random ID of their neighbors. Graph decompositions play a vital role in distributed computing (see, e.g.,~\cite{Peleg2000}), and hence the importance of obtaining a $3$-approximation for correlation clustering.


\textbf{History Independence:} Finally, our algorithm has a useful property, which we call \emph{history independence}, which means that the structure output by the algorithm (e.g., the MIS) depends only on the current graph, and does not depend on the history of topology changes. This means that the output cannot be chosen, or even biased, by the adversary, in case its goal is to prevent us from optimizing some objective function.
Moreover, history independent algorithms compose nicely, which allows us to obtain history independent coloring and matching algorithms, using standard reductions.

\subsection{Related Work}
\label{subsec:related}

\textbf{Distributed MIS:} Finding an MIS is a central theme in the classical distributed setting. The classic algorithms~\cite{Luby86,AlonBI86,IsraelI86} complete within $O(\log{n})$ rounds, with high probability. More recently, a beautiful line of work reduced the round complexity to depend on $\Delta$, the maximal degree in the graph. These include the $O(\Delta+\log^*{n})$-round algorithm of~\cite{BarenboimEK14}, the $O(\log{\Delta}\sqrt{\log{n}})$-round algorithm of~\cite{BarenboimEPS12}, and the very recent $O(\log{\Delta})+2^{O(\sqrt{\log\log{n}})}$-round algorithm of~\cite{Ghaffari15}. An excellent source for additional background and relations to coloring can be found in~\cite{2013Barenboim}.

\textbf{Distributed dynamic MIS:} The problem of finding a fast dynamic distributed MIS algorithm appears as an open problem in~\cite{Elkin07}, which studies the problem of maintaining a sparse spanner in this setting. Additional problems in this setting are also addressed in~\cite{BaswanaKS12}, and in slightly different settings in~\cite{CiceroneSFN03,AwerbuchCK90,Italiano91,KormanP03}. However, we are unaware of any other work in this setting about maintaining an MIS.
One standard approach for maintaining an MIS is running distributed algorithms that are designed for the static setting. This can be done for any distributed algorithm, sometimes using a corresponding \emph{compiler}, e.g., when applied to an asynchronous dynamic setting~\cite{AwerbuchS88,AwerbuchV91,LenzenSW09}. One important exception is the solution in~\cite{KonigW13}, which as in our algorithm, requires a constant number of rounds, but as opposed to our algorithm, makes the strong assumptions that (1) a node gracefully departs the network, and (2) messages may have unbounded size. An additional difference is that the number of broadcasts, as opposed to the number of rounds, may be large.

\textbf{Additional distributed dynamic algorithms:} A huge amount of literature is devoted to devising different algorithms in a self-stabilizing setting (see, e.g.,~\cite{Schneider93,GuellatiK10,Dolev2000} and references therein). This setting is inherently different from ours since it measures the time it takes an algorithm to reach a correct output starting from \emph{any} arbitrary configuration. The setting is asynchronous, but considers a notion of time that is different than ours, where an asynchronous round requires that each node communicates with all of its neighbors. This inherently implies a lot of communication (broadcasts).


An MIS-based clustering algorithm for the asynchronous model that appeared in~\cite{DolevT09} also uses a random node order for recovering after a change. However, their self-stabilizing setting differs from ours in several aspects, such as assuming a bounded degree graph and discussing corrupted states of multiple nodes, and multiple topology changes. In addition, our techniques and analysis are completely different. In particular, the clustering obtained there may not be an approximation to correlation clustering.
Furthermore, the number of rounds required by ~\cite{DolevT09} is $O(\log(n))$ as opposed to the single round algorithm (in expectation) presented here.

Related, but not identical, notions of error confinement, fault local and fault amendable algorithms have been studied in~\cite{KuttenP99,KuttenP00,AzarKP2010}, where the internal memory of a node may change.
Another property that self-stabilizing algorithms should aim for is super-stabilization~\cite{DolevH97}, which means that they are self-stabilizing (eventually output the required structure) and also recover extremely fast from \emph{a single} topology change. Super-stabilization requires also a small adjustment measure, which is the maximum number of nodes that have to change their output. Our MIS algorithm recovers from a single topology change in a single round, and has an adjustment measure of exactly 1, in expectation.

\textbf{Simulating the sequential greedy algorithm:} Simulating random greedy has been used before in order to obtain fast solutions for sequential local computation algorithms (LCA). In this setting, the algorithm does not have access to the entire graph, but rather an oracle access using queries about nodes or edges, and needs to provide an approximate solution for various problems, among which are the problems considered in this paper. We emphasize that the models are inherently different, and hence is our technical analysis. While we bound the size of the set of nodes that may change their output after a topology change, studies in the local computation literature~\cite{NguyenO08,YoshidaYI12,MansourRVX12,LeviRY15} bound the size of the set of nodes that need to be recursively queried in order to answer a \emph{random} node query. In some sense, these sets are opposite: We begin with a single node $v$ changing its state due to a topology change, and look at the set of nodes that change their state \emph{due} to the change of $v$. Local computation algorithms begin with a single node $v$ and look at the set of nodes whose states \emph{determine} the state of $v$.


\section{Dynamic Distributed Computations}
\label{sec:setup}
The distributed setup is a standard message passing model. The network is modeled by an undirected graph $G=(V,E)$ where the node set is $V$, and $E$ consists of the node pairs that have the ability to directly communicate.
We assume a broadcast setting where a message sent by a node is heard by all of its neighbors. Also, we assume a synchronous communication model, where time is divided into rounds and in each round any willing node can broadcast a message to its neighbors. We restrict the size of each message to be $O(\log(n))$ bits, with $n=|V|$ being the size of the network\footnote{This is the standard assumption in a distributed setting. In our dynamic setting where the size of the graph may change we assume knowledge of some upper bound $N \geq n$, with $N = n^{O(1)}$, and restrict the message length to $O(\log(N))= O(\log(n))$.}.
The computational task is to maintain a graph structure, such as a maximal independent set (MIS) or a node clustering. That is, each node has an output, such that the set of outputs defines the required structure.

Our focus is on a dynamic network, where the graph changes over time. As a result, nodes may need to communicate in order to adjust their outputs. The system is \emph{stable} is when the structure defined by the outputs satisfies the problem requirements.

A graph topology change can be with respect to either an edge or a node. In both cases we address both deletions and insertions, both of which are further split into two different types.
For deletions we discuss both a \emph{graceful deletion} and an \emph{abrupt deletion}. In the former, the deleted node (edge) may be used for passing messages between its neighbors (endpoints), and retires completely only once the system is stable again. In the latter, the neighbors of the deleted node simply discover that the node (edge) has retired but it cannot be used for communication. For insertions, we distinguish between a \emph{new node insertion} and an \emph{unmuting} of a previously existing node. In the former, a new node is inserted to the graph, possibly with multiple edges. In the latter, a node that was previously invisible to its neighbors but heard their communication,
becomes visible and enters the graph topology\footnote{The distinction is only relevant for nodes insertions, as there is no knowledge associated with an edge.}.


We assume that the changes are infrequent so that they occur in large enough time gaps, so that the system is always stable before a change occurs.
We consider the performance of an algorithm according to three complexity measures. The first is the \emph{adjustment-complexity}, measuring the number of nodes that change their output 
as a result of the recent topology change. The second is the \emph{round-complexity}, which is the number of rounds required for the system to become stable. Finally, the third, more harsh, score is the \emph{broadcast-complexity}, measuring the total number of broadcasts.

Our algorithms are randomized and thus our results apply to the expected values of the above measures, where the expectation is taken over the randomness of the nodes. We emphasize that this is the only randomness discussed; specifically, the result is not for a random node in the graph nor a random sequence of changes, but rather applies to any node and any sequence of changes. It holds for \emph{every} change in the graph, not only in amortized over all changes.
For this, we make the standard assumption of an \emph{oblivious} non-adaptive adversary. This means that the topology changes do not depend on the randomness of the algorithm. This standard assumption in dynamic settings is natural also for our setting, as, for example, an adaptive adversary can always choose to delete MIS nodes and thereby force worst-case behavior in terms of the number of adjustments.

In what follows we discuss the problem of computing an \emph{MIS}. Here, the outputs of the nodes define a set $M$, where any two nodes in $M$ are not connected by an edge, and any node not in $M$ has a neighbor in $M$. The second problem we discuss is that of \emph{correlation clustering}. Here, the objective is to find a partitioning $\cal{C}$ of the node set $V$, where we favor partitions with a small number  of ``contradicting edges". That is, we aim to minimize the sum
$ \sum_{C \in {\cal C}} \sum_{u,v \in C}  \mathds{1}_{[(u,v) \notin E]} + \sum_{C_1 \neq C_2 \in {\cal C}} \sum_{u\in C_1, v \in C_2}  \mathds{1}_{[(u,v) \in E]} $.

\section{A Template for Maintaining a Maximal Independent Set}
In this section we describe a template for maintaining a maximal independent set (MIS). Initially, we are given a graph $G=(V,E)$ along with an MIS that satisfies certain properties, and after a topology change occurs in the graph, applying the template results in an MIS that satisfies the same properties. That is, the template describes what we do after a single topology change, and if one considers a long-lived process of topology changes, then this would correspond to having initially an empty graph and maintaining an MIS as it evolves. We emphasize that the template describes a process that is not in any particular model of computation, and later in Section~\ref{sec:dist_impl} we show how to implement it efficiently in our dynamic distributed setting. This also means that there are only four topology changes we need to consider: edge-insertion, edge-deletion, node-insertion and node-deletion. For example, the notions of abrupt and graceful node deletions are defined with respect to the dynamic distributed setting because they affect communication, and therefore the implementation of the template will have to address this distinction, but the template itself is only concerned with a single type of node deletion, not in any particular computation model.

Throughout, we assume a uniformly random permutation $\pi$ on the nodes $v \in V$. We define two \emph{states} in which each node can be: $\M$  for an MIS node, and $\NM$ for a non-MIS node. We abuse notations and also denote by $\M$ and $\NM$ the sets of all MIS and non-MIS nodes, respectively. Our goal is to maintain the following \emph{\MIS invariant}: A node $v$ is in $\M$ if and only if all of its neighbors $u \in N(v)$ which are ordered before it according to $\pi$, i.e., for which $\pi(u) < \pi(v)$, are not in $\M$. It is easy to verify that whenever the \MIS invariant is satisfied, it holds that the set $\M$ is a maximal independent set in $G$.
Furthermore, it is easy to verify that this invariant simulates the greedy sequential algorithm, as defined in the introduction.

When any of the four topology changes occurs, there is at most a single node for which the MIS invariant no longer holds. We denote this node by $v^* = v^*(\Gold,\Gnew,\pi)$, where $\Gold$ and $\Gnew$ are the graphs before and after the topology change. For an edge insertion or deletion, $v^*$ is the endpoint with the larger order according to $\pi$. For a node insertion or deletion, $v^*$ is the  node. \footnote{For a node deletion, we slightly abuse the definition of $v^*$ in order to facilitate the presentation, and consider it to be the deleted node. This means that here we consider an intermediate stage of having $v^*$ still belong to the graph w.r.t. the \MIS invariant of all the other nodes, but for $v^*$ the \MIS invariant no longer holds. This is in order to unify the four cases, otherwise we would have to consider all of the neighbors of a deleted node as nodes for which the \MIS invariant no longer holds after the topology change.}
In case the topology change is an edge change, we will need also to take into consideration its other endpoint. We denote it by $v^{**} =v^{**}(\Gold,\Gnew,\pi)$, and notice that by our notation, it must be the case that $\pi(v^{**}) < \pi(v^{*})$. In order to unify our proofs for all of the four possible topology changes, we talk about a node $v^{**}$ also for node changes. In this case we define $v^{**}$ to be $v^*$ itself, and we have that $\pi(v^{**}) = \pi(v^{*})$. Therefore, for any topology change, it holds that $\pi(v^{**}) \leq \pi(v^{*})$.

To describe our template, consider the case where a new edge is inserted and it connects two nodes $\pi(v^{**}) < \pi(v^{*})$, where both nodes are in $\M$. As a result, $v^*$ must now be deleted from the MIS and hence we need to change its state. Notice that as a result of the change in the state of $v^*$, additional nodes may need their state to be changed, causing multiple state changes in the graph.
An important observation is that it is possible that during this process of propagating local corrections of the \MIS invariant, we change the state of a node more than once. As a simple example, consider the case in which $v^*$ has two neighbors, $u_1$ and $u_2$, for which $\pi(v^*) < \pi(u_1),\pi(u_2)$, and that $u_1$ and $u_2$ are connected by a path $(u_1,w_1,w_2,u_2)$, with $\pi(u_1) < \pi(w_1) < \pi(w_2) < \pi(u_2)$. Now, when we change the state of $v^*$ to $\NM$, both $u_1$ and $u_2$ need to be changed to $\M$, for the \MIS invariant to hold. This implies that $w_1$ needs to be changed to $\NM$ and $w_2$ needs to be changed to $\M$. In this case, since $\pi(w_2) < \pi(u_2)$, the node $u_2$ needs to be changed back to state $\NM$.

The above observation leads us to define a set of \emph{influenced} nodes, denoted by $S=S(\Gold,\Gnew,\pi)$, containing $v^*$ in the scenario where we need to change its state, and all other nodes whose state we must  subsequently change as a result of the state change of $v^*$.
To formally define the set $S$ we introduce some notations. The notations rely on the graph structure of $\Gnew$ unless the change is a node deletion in which case the rely on $\Gold$.
For each node $u$, we define $I_{\pi}(u) = \{ v \in N(u) ~|~ \pi(v) <  \pi(u)\}$, the set of neighbors of $u$ that are ordered before it according to $\pi$.
These are the nodes that can potentially \emph{influence} the state of $u$ according to the \MIS invariant. 
The definition of $S$ is recursive, according to the ordering induced by $\pi$.
If immediately after the topology change, in the new graph $G$ with the order $\pi$ it holds that the \MIS invariant still holds for $v^*$, then we define $S=\emptyset$. (This is motivated by the fact that no node is influenced by this change.) Otherwise, we denote $S_0 = \{v^*\}$, and inductively define
\begin{equation} \label{eq:defS}
S_{i} = \{u ~|~ u \in \M \mbox{, and } S_{i-1}\cap I_{\pi}(u)\neq\emptyset\} \cup \{u ~|~ u \in \NM \mbox{, and every }  v \in I_{\pi}(u) \cap \M \mbox{ is in }\cup_{j=0}^{i-1}{S_j}) \}.
\end{equation}

The set $S$ is then defined as $S=\bigcup_i{S_i}$. Notice that a node $u$ can be in more than one set $S_i$, as is the case for $u_2$ in the example above, which is in both $S_1$ and $S_4$. The impact of a node $u$ being in more than one $S_i$ is that in order to maintain the \MIS invariant, we need to make sure that we update the state of $u$ after we update that of $w$, for any $w$ such that $w \in I_{\pi}(u)$. Instead of updating the state of $u$ twice, we can simply wait and update it only after the state of every such $w$ is updated. For this, we denote by $i_u = \max\{i~|~ u \in S_i\}$ the maximal index $i$ for which $u$ is in $S_i$.

\begin{algorithm}
Initially, $G=(V,E)$ satisfies the \MIS invariant.\\
On topology change at node $v^*$ do:\\
1. Update state of $v^*$ if required for \MIS to hold\\
2. For $i \leftarrow 1$, until $S_{i} = \emptyset$, do:\\
3. \quad For every $u \in S_{i}$ such that $i = i_u$:\\
4. \quad\quad Update state of $u$\\
5. \quad $i \leftarrow i+1$
\caption{A Template for Dynamic Correlation Clustering.}
\label{alg:template}
\end{algorithm}

We formally describe our template in Algorithm~\ref{alg:template}.
By construction, the updated states after executing Algorithm~\ref{alg:template} satisfy the \MIS invariant.
In addition, the crucial property that is satisfied by the above template is that in expectation, the size of the set $S$ is $1$. The remainder of this section is devoted to proving the following, which is our main technical result.

\begin{thm}
\label{thm:ES-const}
For every two graphs $\Gold$ and $\Gnew$ that differ only by a single edge or a single node, it holds that  
$\E_\pi \left[|S(\Gold, \Gnew, \pi)|\right] \leq 1$. 
\end{thm}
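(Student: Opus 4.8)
The plan is to follow the roadmap sketched in the "Challenge" paragraph: introduce an auxiliary random set $S'$ obtained by pretending that $v^*$ is first in the order, show $|S|$ is either $|S'|$ or $0$ depending on a clean event, and then bound the relevant conditional probability. First I would define, for a fixed permutation $\pi$, the modified permutation $\pi^*$ that moves $v^*$ to the first position and keeps the relative order of all other nodes unchanged, and let $S' = S'(\Gold,\Gnew,\pi) := S(\Gold,\Gnew,\pi^*)$. The point of this definition is that it does \emph{not} depend on where $v^*$ actually sits in $\pi$: once $v^*$ is pushed to the front, the recursion in~\eqref{eq:defS} that generates the $S_i$'s only ever consults the order \emph{among} the nodes it has already placed in earlier $S_j$'s, together with the order between those nodes and their other neighbors, and $v^*$ being first is consistent with every original $\pi$. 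So $S'$ is a function of the graph data and of the relative order of $V\setminus\{v^*\}$ only.

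Next I would prove the structural dichotomy: $S(\Gold,\Gnew,\pi)=S'$ if $v^*$ has the minimal $\pi$-rank among the nodes of $S'$ (equivalently, among $S'\cup\{v^{**}\}$ when the change is an edge change — one must be a little careful to fold $v^{**}$ in correctly since $\pi(v^{**})<\pi(v^*)$ always, so what really matters is whether $v^*$ precedes all nodes of $S'\setminus\{v^*\}$), and $S(\Gold,\Gnew,\pi)=\emptyset$ otherwise. The "otherwise" direction is the easy one: if some $u\in S'$ has $\pi(u)<\pi(v^*)$, then actually the MIS invariant is not violated at $v^*$ after the change — intuitively, $v^*$'s state is already pinned down correctly by a lower-ranked neighbor or chain — so $S=\emptyset$ by the first clause of the template. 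The "if" direction requires showing that when $v^*$ is $\pi$-minimal in $S'$, the recursion producing $S$ from $\pi$ and the recursion producing $S'$ from $\pi^*$ generate the same sets level by level; this is an induction on $i$, using that for every node $u$ entering at level $i$, all of $u$'s relevant lower-ranked neighbors are in $\cup_{j<i}S_j\subseteq S'$, and on that set $\pi$ and $\pi^*$ induce the same order. This inductive equality is where most of the care goes, because one has to track both the $u\in\M$ clause and the $u\in\NM$ clause in~\eqref{eq:defS}, and check that no node outside $S'$ ever gets pulled in under $\pi$.

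Given the dichotomy, $\E_\pi[|S|] = \E_\pi\!\left[|S'|\cdot \mathbf{1}[v^* \text{ is } \pi\text{-minimal in } S']\right]$, and it suffices to show $\Pr_\pi[\,v^* \text{ is } \pi\text{-minimal in } S' \mid S'=T\,]\le 1/|T|$ for every admissible set $T\ni v^*$. If $S'$ were a deterministic set this is exactly $1/|T|$ by symmetry of $\pi$. The subtlety — and I expect this to be the main obstacle — is that conditioning on $\{S'=T\}$ is conditioning on a random set, which biases $\pi$. The resolution is the last sentence of the Challenge paragraph: the event $\{S'=T\}$ is measurable with respect to (i) the relative order of $\pi$ restricted to $V\setminus\{v^*\}$ alone — in particular it says nothing about the relative order between $\{v^*\}$ and $T\setminus\{v^*\}$ beyond what is logically implied — so that conditioned on $\{S'=T\}$, the rank of $v^*$ among $T$ is still uniform on $\{1,\dots,|T|\}$. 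I would make this precise by a symmetrization/coupling argument: partition permutations into classes that agree on the order of $V\setminus\{v^*\}$ and differ only in the insertion position of $v^*$; the event $\{S'=T\}$ is constant on each class (since $S'$ does not see $v^*$'s position), and within each class exactly a $1/|T|$ fraction of the $|V|$ insertion positions place $v^*$ ahead of all of $T\setminus\{v^*\}$ — indeed, among the $|T|$ positions relative to the elements of $T$, exactly one is "before all of them", and these are equiprobable. Averaging over classes gives the bound $1/|T|$, hence $\E_\pi[|S|]\le \sum_T \Pr[S'=T]\cdot |T|\cdot \tfrac1{|T|} = \sum_T\Pr[S'=T]\le 1$, which is the claim. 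I would finish by remarking that the four topology changes (edge/node, insertion/deletion) are handled uniformly by the $v^*,v^{**}$ conventions fixed in the text, with node deletion using the stated intermediate-stage convention and the graph structure of $\Gold$ rather than $\Gnew$; no case needs a genuinely different argument.
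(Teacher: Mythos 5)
Your overall architecture matches the paper's: define $S'$ by forcing $v^*$ to the front of the order, prove the dichotomy ($S=\emptyset$ unless $v^*$ is $\pi$-minimal in $S'$, in which case $S\subseteq S'$), and reduce everything to showing $\Pr\left[v^* \text{ minimal in } T \mid S'=T\right]\le 1/|T|$. But that final probabilistic step --- the technical heart of the theorem --- has a genuine gap. You partition permutations into classes that fix the entire relative order of $V\setminus\{v^*\}$ and vary only the insertion position of $v^*$, and you assert that within each class the $|T|$ slots of $v^*$ relative to $T\setminus\{v^*\}$ are equiprobable, so that a $1/|T|$ fraction of the $n$ insertion positions puts $v^*$ first. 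This is false: if the first element of $T\setminus\{v^*\}$ has rank $m$ in the fixed order of $V\setminus\{v^*\}$, then exactly $m$ of the $n$ insertion positions place $v^*$ before all of $T\setminus\{v^*\}$, so the within-class probability is $m/n$, which can range from $1/n$ to nearly $1$. Averaging $m/n$ over the classes contained in $\{S'=T\}$ gives $1/|T|$ only if conditioning on $\{S'=T\}$ does not bias where the elements of $T\setminus\{v^*\}$ sit inside $V\setminus\{v^*\}$ --- and that is precisely the ``trap'' the theorem must dodge; it does not follow from the mere fact that $\{S'=T\}$ is measurable with respect to the order of $V\setminus\{v^*\}$.

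What is actually needed, and what the paper establishes in Claims~\ref{claim:u-in-barS} and~\ref{claim:sigma-pi}, is a strictly stronger invariance: the event $\{S'=T\}$ is determined by the internal order of $T\setminus\{v^*\}$ and the internal order of $V\setminus T$ alone, i.e., it is invariant under re-interleaving these two blocks. Only with that invariance does symmetry make the rank of $v^*$ within $T$ uniform after conditioning. Proving it is a nontrivial pair of inductions (first that nodes outside $T$ keep their states and stay outside $S'$ under any permutation agreeing on the two internal orders, then that nodes of $T$ stay inside $S'$), and this ingredient is entirely absent from your proposal. A second, smaller omission: for an edge change the identity of $v^*$ is itself determined by $\pi$ (it is the higher-ordered endpoint), so the relevant probability is additionally conditioned on $\pi(v^{**})\le\pi(v^*)$; the paper needs an extra case analysis ($v^{**}\in T$, $v^{**}=v^*$, $v^{**}\notin T$, using an independence observation in the last case) to show this extra conditioning cannot push the probability above $1/|T|$. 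Your remark that ``what really matters is whether $v^*$ precedes all nodes of $S'\setminus\{v^*\}$'' does not address this correlation.
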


\paragraph{Outline of the proof:} In order to prove that $\E[|S|] \leq 1$, 
instead of analyzing the set $S$ directly, we analyze the set $S' = S'(\Gold,\Gnew,\pi,v^*)$,  which is defined via recursion similarly to $S$ with three modifications: (1) It is always the case that $S_0' = \{v^*\}$ (2) The graph according to which $S'$ is defined is $\Gold$ in the case of a node deletion or an edge insertion, and $\Gnew$ otherwise. (3) The permutation according to $S'$ is defined as $\pi'$, that is identical to $\pi$ other than its value for $v^*$ that is forced to be the minimal among all other $\pi$ values. Notice that $S'$ does not depend on $\pi(v^*)$ and in particular, having knowledge about its elements does not give any information as to whether $\pi(v^*) < \pi(v^{**})$ or vice versa. 


In Lemma~\ref{lem:SandS'}, we prove that if $\pi(v^*) \neq \min\left\{ \pi(u)\mid u\in S'\right\}$ then $S=\emptyset$, and otherwise $S=S'$ (in fact, it would be enough that $S\subseteq S'$). Then, in Lemma~\ref{lem:one-over-P}, we prove that for any set $P \subseteq V$, given the event that $P=S'$, the probability, over the random choice of $\pi$, that $\pi(v^*) =\min\left\{ \pi(u)\mid u\in P\right\}$ is $1/|P|$. This leads to the required result of Theorem~\ref{thm:ES-const}.
Lemma~\ref{lem:one-over-P} would be trivial if there was no correlation between $\pi$ and $S'$. However, the trap we must avoid here is that $S'$ is defined according to $\pi$, and therefore when analyzing its size we cannot treat $\pi$ as a \emph{uniformly} random permutation. To see why, suppose we know that inside $S'\setminus\{v^*\}$ we have nodes with large order in $\pi$. Then the probability that the order of $v^*$ in $\pi$ is smaller than all nodes in $S'\setminus\{v^*\}$, is much larger than $1/|S'|$, and can in fact be as large as $1-o(1)$. In other words, $S'$ gives some information over $\pi$. Nevertheless, we show that this information is either about the order between nodes outside of $S'$, or about the order between nodes within $S'\setminus\{v^*\}$. Both types of restrictions on $\pi$ do not affect the probability that $v^*$ is the minimal of $S'$.


We now formally prove our result as outlined above. Throughout we use the notation $u \in \M$ or $u \in \NM$. This applies only to nodes $u$ for which we are guaranteed that their states remain the same despite the topology change.
\begin{lem}
\label{lem:SandS'}
If $\pi(v^*)\neq\min\left\{ \pi(u)\mid u\in S'\right\} $ then $S=\emptyset$. Otherwise, $S\subseteq S'$.
\end{lem}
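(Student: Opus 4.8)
The plan is to compare the two recursive processes defining $S$ and $S'$ step by step, keeping careful track of where the permutations $\pi$ and $\pi'$ (and the graphs $\Gold$ versus $\Gnew$) enter. The key point to isolate first is that $\pi'$ and $\pi$ agree on the relative order of every pair of nodes \emph{other} than $v^*$, and that $\pi'$ places $v^*$ first. Hence $I_{\pi'}(u) = I_\pi(u)$ for every $u \neq v^*$ with $v^* \notin N(u)$, while for a neighbor $u$ of $v^*$ we have $I_{\pi'}(u) = I_\pi(u) \cup \{v^*\}$ if $\pi(v^*) > \pi(u)$, and $I_{\pi'}(u) = I_\pi(u)$ if $\pi(v^*) < \pi(u)$ (in the latter case $v^*$ already belongs to $I_\pi(u)$). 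The graph discrepancy only matters at the single modified edge/node, so outside a bounded neighborhood of $v^*$ (and $v^{**}$) the two recursions see identical local data.

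First I would handle the easy branch: suppose $\pi(v^*) \neq \min\{\pi(u) \mid u \in S'\}$. Then there is some $w \in S' \setminus \{v^*\}$ with $\pi(w) < \pi(v^*)$; among all of $S'$ pick $w$ of minimal $\pi$-order, so in particular $w \in S_1'$ and every $\pi$-predecessor of $w$ lies outside $S'$. I want to argue that in the \emph{true} process (graph $\Gnew$ when applicable, order $\pi$), the \MIS invariant already holds at $v^*$, so $S = \emptyset$ by definition. The idea: walk down the definition of $S_1'$. Either $w \in \M$ and $v^* \in I_{\pi'}(w)$, which since $\pi(w) < \pi(v^*)$ forces $w \in N(v^*)$ with $\pi(w) < \pi(v^*)$, i.e.\ $w$ is a lower-order neighbor of $v^*$ that is in $\M$ — but then, examining the four topology changes, this is exactly the configuration in which the invariant at $v^*$ is \emph{not} violated (a lower-order neighbor in $\M$ keeps $v^*$ out of $\M$, and one checks case by case that $v^*$'s required state hasn't changed). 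Or $w \in \NM$ with all of $I_{\pi'}(w) \cap \M \subseteq \{v^*\}$ — but $w$'s only $\M$-predecessor being $v^*$ is impossible since $\pi(w) < \pi(v^*)$ means $v^* \notin I_\pi(w)$, and $\pi'$ only \emph{adds} $v^*$ to $I(w)$ when $\pi(v^*) > \pi(w)$; so $I_{\pi'}(w) \cap \M = \emptyset$, which would already force $w \in \M$ under $\pi$, contradiction unless the change rescinded an $\M$-predecessor of $w$ — and the only node whose state we are uncertain about is $v^*$, circling back. Pinning down this case analysis cleanly, across all four change types, is the step I expect to be the main obstacle; the rest is bookkeeping.

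For the other branch, assume $\pi(v^*) = \min\{\pi(u) \mid u \in S'\}$. Now $\pi$ restricted to $S'$ orders $v^*$ first, which is precisely what $\pi'$ does globally; so on the node set $S'$ the two orders induce the same relative ranking. I would then prove $S_i \subseteq S'$ for all $i$ by induction on $i$: $S_0 = \{v^*\} = S_0'$ (we are in the case where the invariant \emph{is} violated, else $S = \emptyset$ trivially); and for the inductive step, given $u \in S_i$, use the $\M$-clause or $\NM$-clause of \eqref{eq:defS} together with $I_\pi(u) \cap S' = I_{\pi'}(u) \cap S'$ (valid because all relevant predecessors that lie in $S'$ have $\pi$-order between $\pi(v^*)$ and $\pi(u)$, a range on which $\pi$ and $\pi'$ agree up to the insertion of $v^*$ at the bottom, and the graph edges incident to nodes of $S'$ other than at $v^*$/$v^{**}$ coincide in $\Gold$ and $\Gnew$) to conclude $u$ enters the corresponding $S_j'$ with $j \le i$. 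The containment $S \subseteq S'$ follows by taking the union. Throughout I would lean on the remark in the outline that $S'$ does not depend on $\pi(v^*)$, so that conditioning on "which branch we are in" is legitimate and does not disturb the later probabilistic argument of Lemma~\ref{lem:one-over-P}.
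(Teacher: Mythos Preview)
Your proposal is correct and follows essentially the same route as the paper: in the first branch you take the $\pi$-minimal $w\in S'$, establish that it is an $\M$-neighbor of $v^*$ with $\pi(w)<\pi(v^*)$, and conclude the invariant persists at $v^*$ (the paper does this by first ruling out $w\in\NM$ via a blocking $\M$-predecessor, whereas you case-split on the two clauses of $S_1'$---logically equivalent); in the second branch the paper actually proves the stronger equality $S_i=S_i'$ for all $i$, but your inductive containment $S_i\subseteq S'$ suffices and uses the same mechanism. One small point: your assertion that the minimal $w$ lies in $S_1'$ is true but not entirely immediate and deserves the one-line argument (any $\pi'$-predecessor of $w$ in $S'$ other than $v^*$ would contradict minimality), which is precisely what the paper's ``$w\in\M$ hence $w\in S_1'$'' step encodes.
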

\begin{proof}
First, assume that $\pi(v^*)\neq\min\left\{ \pi(u)\mid u\in S'\right\}$. We show that the \MIS invariant still holds after the topology change, and so $S=\emptyset$.
Consider the node $w$, for which $\pi(w)=\min\left\{ \pi(u)\mid u\in S'\right\}$. Notice that $w\not\in S$, because $\pi(w) < \pi(v^*)$. We claim that $w \in \M$. Assume, towards a contradiction, that $w \in \NM$. This implies that $w$ has a neighbor $u \in \M$ such that $\pi(u) < \pi(w)$. For this node $u$ we must have $u \notin S'$ due to the minimality of $\pi(w)$. It follows, according to the construction of $S'$ that $w$ cannot be an element of $S'$, leading to a contradiction.

We have that $w \in \M$ and due to the minimality of $\pi(w)$, it must be that $w \in S'_1$, which implies that $w$ is a neighbor of $v^*$. But then, when considering $S$, $v^*$ has a neighbor other than $v^{**}$ which is ordered before it according to $\pi$ which is in $\M$. In the case of an edge insertion or deletion, this means that $v^{*}$ remains in $\NM$ despite the topology change meaning that $S=\emptyset$. In the case of a node deletion, $v^*$ was not in $\M$ prior to the change hence $S=\emptyset$. In the case of a node insertion, $v^*$ does not enter $\M$ hence again, $S=\emptyset$.

Next, assume that $\pi(v^*)=\min\left\{ \pi(u)\mid u\in S'\right\}$. We show that either $S=\emptyset$ or $S=S'$.
%
If there is no need to change the state of $v^*$ as a result of the topological change then $S_0=\emptyset$, and so $S=\emptyset$ and the claim holds. It remains to analyze the case where $S_0=S'_0=\{v^*\}$. 
If $u \in S'_1$ then $\pi(v^*) < \pi(u)$ hence according to its definition $u \in S_1$. If $u \notin S'_1$ then $u$ must have a neighbor $w \in \M$ with $\pi(w)<\pi(u)$ meaning that $u \notin S_1$. We have that $S_1=S'_1$ and similarly $S_i=S'_i$ for all $i >1$. We conclude that $S'=S$ as required.
\end{proof}

The following lemma shows that the probability of having $S=S'$ is $1/|S'|$, which immediately lead to Theorem~\ref{thm:ES-const} as the only other alternative is $S=\emptyset$. 
\begin{lem}
\label{lem:one-over-P}
For any set of nodes $\S\subseteq V$, it holds that $$\Pr\left[\pi(v^*)=\min\left\{ \pi(u)\mid u\in \S\right\} \mid S'=\S \mbox{ and } \pi(v^{**})\leq\pi(v^{*}) \right]=\frac{1}{|\S|}.$$
\end{lem}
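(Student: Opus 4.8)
\textbf{Plan for proving Lemma~\ref{lem:one-over-P}.} The key idea is that conditioning on the event $\{S' = \S\}$ restricts the permutation $\pi$ only in ways that are irrelevant to the question of whether $v^*$ receives the minimum rank inside $\S$. To make this precise I would decompose the randomness of $\pi$ into three independent pieces: (a) the relative order of the nodes \emph{outside} $\S$, (b) the relative order of the nodes \emph{inside} $\S\setminus\{v^*\}$, (c) the ``interleaving'' data that says, for each node, how many elements of $\S$ are below it and how many elements of $V\setminus\S$ are below it — i.e. the set of ranks occupied by $\S$ within $\{1,\dots,n\}$ together with where $v^*$ sits among the $|\S|$ ranks allotted to $\S$. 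A uniformly random $\pi$ is equivalent to choosing (a), (b), (c) independently and uniformly; in particular, conditioned on any outcome of (a) and (b) and on the set of ranks occupied by $\S$, the position of $v^*$ among those $|\S|$ ranks is uniform, so $\Pr[v^*\text{ is minimal in }\S] = 1/|\S|$.

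\textbf{The main step} is therefore to show that the event $\{S'=\S\}$ (intersected with $\{\pi(v^{**})\le \pi(v^{*})\}$) is measurable with respect to pieces (a) and (b) alone — it tells us nothing about piece (c), and in particular nothing about where $v^*$ ranks inside $\S$. Recall from the outline that $S'$ is built from the permutation $\pi'$, which forces $v^*$ to be globally minimal, so by construction $S'$ does not depend on $\pi(v^*)$ at all; this already kills any dependence of $\{S'=\S\}$ on the rank of $v^*$. What remains is to argue that $\{S'=\S\}$ depends on the remaining ranks only through the two \emph{intra-block} orders (a) and (b), not through the interleaving between $\S$ and its complement. The recursive definition~\eqref{eq:defS} (transported to $S'$) only ever compares $\pi$-values of \emph{neighbouring} nodes, and an inductive walk through the recursion shows that whether a node $u$ enters $S'$ is determined by: the induced order on $V\setminus\S$ (which fixes, for each candidate $u\notin\S$, whether it has an $\M$-neighbour below it that lies outside $\S$), together with the induced order on $\S\setminus\{v^*\}$ and the fact that every element of $\S\setminus\{v^*\}$ sits above $v^*$ in $\pi'$. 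Crucially, the recursion never needs to know whether a node of $\S$ precedes or follows a node of $V\setminus\S$ \emph{except} in situations already governed by the two within-block orders; I would formalize this as: $S'$ is a deterministic function of $\pi|_{V\setminus\S}$ and $\pi|_{\S\setminus\{v^*\}}$, proved by induction on the layers $S'_i$. The condition $\pi(v^{**})\le\pi(v^{*})$ is automatically compatible — for edge changes $v^{**}\notin\S$ or $v^{**}\in\S\setminus\{v^*\}$ and the inequality $\pi(v^{**})<\pi(v^*)$ is forced precisely on the event that makes $v^*$ minimal in $S'$, contributing no extra information about $v^*$'s rank within $\S$; for node changes $v^{**}=v^*$ and the condition is vacuous.

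\textbf{Concluding the computation.} Once $\{S'=\S\}\cap\{\pi(v^{**})\le\pi(v^*)\}$ is shown to be a function of $(\pi|_{V\setminus\S},\,\pi|_{\S\setminus\{v^*\}})$, I condition on a fixed outcome of that pair. Under this conditioning the only remaining randomness is the choice of which $|\S|$ of the $n$ ranks are assigned to $\S$ (uniform over all $\binom{n}{|\S|}$ choices) and then the placement of $v^*$ within those $|\S|$ ranks: since $\pi$ was uniform and the within-block orders are already fixed, $v^*$ is equally likely to occupy any of the $|\S|$ positions allotted to $\S$. Hence $\Pr[\pi(v^*)=\min\{\pi(u):u\in\S\}\mid\cdots]=1/|\S|$, and since this holds for every fixed outcome of $(\pi|_{V\setminus\S},\pi|_{\S\setminus\{v^*\}})$ in the conditioning event, it holds after averaging, giving the claim. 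Combining with Lemma~\ref{lem:SandS'} (which says $S$ equals $S'$ exactly on this event and is empty otherwise) yields $\E_\pi[|S|]=\sum_{\S}\Pr[S'=\S]\cdot\frac{1}{|\S|}\cdot|\S|=\sum_{\S}\Pr[S'=\S]\le 1$, which is Theorem~\ref{thm:ES-const}.

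\textbf{Anticipated obstacle.} The delicate point — and the one I would spend the most care on — is the inductive claim that membership in $S'$ never depends on the $\S$-vs-complement interleaving. One must be careful with the two clauses of~\eqref{eq:defS}: the $\NM$-clause asks that \emph{every} lower $\M$-neighbour of $u$ already lies in $\bigcup_{j<i}S'_j$, which a priori could be sensitive to whether some lower $\M$-neighbour lies in $\S$ or not. The resolution is that any lower $\M$-neighbour $w$ of $u$ with $w\notin\S$ is, by definition of $S'$, never going to enter $S'$, so its presence blocks $u$ regardless of ranks; and whether such a $w$ is \emph{lower} than $u$ is part of the data in piece (a) when $u\notin\S$, or is forced when $u$'s relevant neighbours are being compared within $\S\setminus\{v^*\}$. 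Making this case analysis airtight across all four topology-change types (and handling the $v^{**}$ bookkeeping for edge insertions/deletions, where $\pi(v^{**})<\pi(v^*)$ in $\pi$ but $v^{**}$ may still sit above $v^*$ in $\pi'$) is the real work of the proof.
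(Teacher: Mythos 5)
Your proposal is correct and follows essentially the same route as the paper's proof: the decomposition into the two within-block orders plus interleaving data is exactly the paper's $(\sigma^{+},\sigma^{-})$ fiber argument, and your "main step" (that $\{S'=\S\}$ is determined by $\pi|_{\S\setminus\{v^*\}}$ and $\pi|_{\barS}$) is precisely the content of Claims~\ref{claim:u-in-barS} and~\ref{claim:sigma-pi}. The one imprecision is your handling of the conditioning on $\pi(v^{**})\leq\pi(v^{*})$ when $v^{**}\in\S\setminus\{v^*\}$: there the conditional probability is $0$, not $1/|\S|$ (the paper's own case analysis yields only the upper bound $1/|\S|$, which is all Theorem~\ref{thm:ES-const} needs), so "contributing no extra information" should be replaced by the three-case bound.
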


To prove this lemma we focus on $S'$. Notice that the events we considered in the previous lemma depend only on the ordering 
implied by $\pi$ and hold for any configuration of states for the nodes that satisfy the \MIS invariant. Roughly speaking, the lemma 
will follow from the fact the the event $S'=\S$ does not give any information about the order implied by $\pi$ between nodes in $\S$ 
and nodes in $\barS$. To this end, for every permutation $\tau$ on $V$, we define $S'(\tau) = S'(\Gold,\Gnew,\tau,v^*)$ as the set corresponding to $S'$ under the ordering induced by $\tau$. 
We denote by $\Pi_{\S}$ the set of all permutations $\tau$ for which it holds that $S'(\tau)=\S$. We first need to establish the 
following about permutations in $\Pi_{\S}$: If $\pi$ and $\sigma$ are two permutations on $V$ such that $\pi|_{\S}=\sigma|_{\S}$ 
and $\pi|_{\barS}=\sigma|_{\barS}$, then $\sigma\in\Pi_{\S}$ if and only if $\pi\in\Pi_{\S}$.

\begin{claim}
\label{claim:u-in-barS}
Let $P \subseteq V$ be a set of nodes, and let $\pi$ and $\sigma$ be two permutations such that $\pi|_{\S}=\sigma|_{\S}$ and $\pi|_{\barS}=\sigma|_{\barS}$. Assume $\pi \in \Pi_\S$. We have that $\barS \subseteq V \setminus S'(\sigma)$ and every $u \in \barS$ has the same state according to $\pi$ and $\sigma$.
\end{claim}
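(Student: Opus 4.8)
The plan is to prove Claim~\ref{claim:u-in-barS} by strong induction on the order in which nodes are added to the recursively-defined set $S'(\sigma)$, exploiting the fact that the recursion defining $S'$ depends on $\pi$ only through the relative order it induces, and that the two ingredients of that recursion — membership in $\M$ versus $\NM$, and the comparison $\pi(v) < \pi(u)$ for neighbors — are preserved when we pass from $\pi$ to $\sigma$ as long as we stay inside $\barS$ (by the hypothesis $\pi|_{\barS}=\sigma|_{\barS}$) or inside $\S$ (by $\pi|_{\S}=\sigma|_{\S}$). The only comparisons that can differ between $\pi$ and $\sigma$ are those between a node of $\S$ and a node of $\barS$, so the heart of the argument is to show that no such ``cross'' comparison is ever consulted when deciding whether a node of $\barS$ enters $S'(\sigma)$, nor when determining the state of a node of $\barS$.

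First I would address the state claim. A node $u\in\barS$ has a well-defined state under $\pi$ (it is one of the nodes whose state is unchanged by the topology change, by the convention stated just before Lemma~\ref{lem:SandS'}); its state is determined by the \MIS invariant applied with the ordering $\pi$, i.e.\ $u\in\M$ iff no neighbor $v$ with $\pi(v)<\pi(u)$ is in $\M$, and this in turn unwinds to a statement depending only on the restriction of $\pi$ to the connected ``downward closure'' of $u$ via the $I_\pi(\cdot)$ relation. Since $u\notin S'(\pi)=\S$, I claim every node reachable from $u$ by repeatedly taking lower-$\pi$ neighbors also lies in $\barS$: intuitively, $S'$ is exactly the set of nodes whose state could be affected by the change propagating from $v^*$, and if some lower neighbor of $u$ were in $\S$ then $u$ itself would have been put into some $S'_i$. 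Making this precise — that $V\setminus S'(\pi)$ is closed under the relevant ``look at lower neighbors'' operation for the purpose of state computation — is the technical core, and it follows by examining the two clauses of \eqref{eq:defS}: if $u\in\NM$ and some $v\in I_\pi(u)\cap\M$ lies in $\S$, the $\NM$-clause would eventually place $u$ in $S'$; if $u\in\M$ and some $v\in I_\pi(u)$ lies in $\S$, the $\M$-clause places $u$ in $S'$. Contrapositively, for $u\in\barS$ all the lower-neighbor comparisons and states needed to recompute $u$'s state stay within $\barS$ and within the $\M/\NM$ labels that $\pi$ and $\sigma$ agree on, so $u$ has the same state under $\sigma$; in particular $u\notin S'(\sigma)$ for the same reason it was not in $S'(\pi)$.

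Then I would turn the above into the induction establishing $\barS\subseteq V\setminus S'(\sigma)$. Suppose for contradiction that some node of $\barS$ enters $S'(\sigma)$, and take $x\in\barS\cap S'(\sigma)$ minimizing the index $i$ with $x\in S'_i(\sigma)$ (noting $x\neq v^*$ since $v^*\in\S$, using that $v^*\in S'$ always). Examine why $x$ entered $S'_i(\sigma)$: either $x\in\M$ (state under $\sigma$, which by the previous paragraph equals its state under $\pi$) and some $y\in S'_{i-1}(\sigma)\cap I_\sigma(x)$, or $x\in\NM$ and every $v\in I_\sigma(x)\cap\M$ lies in $\cup_{j<i}S'_j(\sigma)$. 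In the first case $y$ has strictly smaller entry-index, so by minimality of $i$ we get $y\in\S$; but then the comparison $\sigma(y)<\sigma(x)$ is a cross comparison between $\S$ and $\barS$, and I must rule this out — here I use that $x$'s position relative to its neighbors, together with the structure already forced by $\pi\in\Pi_\S$, forbids it, because running the same recursion under $\pi$ (where $y\in\S$, and where by the state analysis $x$ still has the same state) would then have pulled $x$ into $S'(\pi)=\S$, contradicting $x\in\barS$. The $\NM$ case is handled symmetrically, comparing the set $I_\sigma(x)\cap\M$ with $I_\pi(x)\cap\M$ and observing these coincide once we know $x$'s relevant neighbors split consistently across $\S$ and $\barS$. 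I expect the main obstacle to be precisely this bookkeeping: pinning down exactly which comparisons the recursion consults and arguing that every consulted comparison is either intra-$\S$ or intra-$\barS$, so that the hypotheses $\pi|_\S=\sigma|_\S$ and $\pi|_{\barS}=\sigma|_{\barS}$ suffice to transport the whole recursive construction. Once that invariant is isolated, the claim — and its corollary that $\Pi_\S$ is determined by $\pi|_\S$ and $\pi|_{\barS}$ jointly — drops out, and it is exactly what Lemma~\ref{lem:one-over-P} needs: the event $S'=\S$ constrains only the internal order of $\S$ and the internal order of $\barS$, hence is independent of where $v^*$ falls among the members of $\S$, giving probability $1/|\S|$.
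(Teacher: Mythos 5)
Your overall strategy --- induct along the order and argue that the recursion defining $S'$ only ever consults intra-$\S$ or intra-$\barS$ comparisons --- is the same as the paper's, but there are two concrete gaps at exactly the point you flag as the technical core. First, you misapply the $\NM$-clause of \eqref{eq:defS}: you write that ``if $u\in\NM$ and some $v\in I_\pi(u)\cap\M$ lies in $\S$, the $\NM$-clause would eventually place $u$ in $S'$,'' but the clause requires \emph{every} $v\in I_\pi(u)\cap\M$ to lie in $S'$. Consequently your claim that the entire lower-neighbor downward closure of a node $u\in\barS$ stays inside $\barS$ is false: a node of $\barS$ in state $\NM$ may well have lower-order neighbors in $\S$. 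The correct (and weaker) fact, which is what the paper's proof extracts, is existential: if $u\in\barS$ is in $\NM$, then precisely \emph{because} $u\notin S'(\pi)$ there must exist a witness $z\in I_\pi(u)\cap\M$ with $z\notin\S$, and this single witness, handled by the induction hypothesis, certifies $u$'s state under $\sigma$. Your ``closure'' formulation cannot be repaired into a proof as stated.

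Second, in your contradiction argument for $\barS\subseteq V\setminus S'(\sigma)$, the $\M$-case is not closed. You have $x\in\barS$ in state $\M$ entering $S'(\sigma)$ because of a neighbor $y\in\S$ with $\sigma(y)<\sigma(x)$, and you propose to derive a contradiction by ``running the same recursion under $\pi$,'' but that would require $\pi(y)<\pi(x)$ --- exactly the cross comparison that the hypotheses do not let you transfer. The missing ingredient is the paper's key structural observation, which makes this configuration impossible outright: a node $u\in\barS$ with a neighbor $w\in\S$ must be in state $\NM$. (If $\pi(w)<\pi(u)$ and $u\in\M$, the $\M$-clause pulls $u$ into $S'(\pi)$, contradicting $u\in\barS$; if $\pi(u)<\pi(w)$ and $u\in\M$, then $u\in I_\pi(w)\cap\M$ with $u\notin S'$ blocks the $\NM$-clause for $w$, and the $\M$-clause for $w$ would require two adjacent nodes in $\M$.) With that in hand, the $\M$-case of the induction only ever involves neighbors inside $\barS$, where $\pi$ and $\sigma$ agree, and the argument goes through; without it, your sketch does not rule out the cross comparison it hinges on.
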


\begin{proof}
Let $u \in \barS$. We prove that $u \in V \setminus S'(\sigma)$ and that its state under $\sigma$ is the same as it is under $\pi$ by induction on the order of nodes in $\barS$ according to $\pi$ (which is equal to their order according to $\sigma$).

For the base case, assume that $u$ has the minimal order in $\barS$.
We claim that $u$ cannot have a neighbor in $\S$.
Assume, towards a contradiction, that $u$ has a neighbor $w\in\S$. Since $w\in\S$ then it is possible that after the a $w$ will be in $\M$. Since two nodes in $\M$ cannot be neighbors and $u \not\in\S$, then $u$ must be in $\NM$ according to $\pi$. In this case there is a node $z\in I_{\pi}(u)\cap\barS$ that is in $\M$ according to $\pi$. But this cannot occur due to the minimality of $\pi(u)$ in $\barS$. Therefore, $u$ has no neighbors in $\S$ as required.

We have that all of the neighbors of $u$ are in $\barS$ and that $u$ is the minimal among its neighbors according to $\pi$. Since $\pi|_{\barS}=\sigma|_{\barS}$ we have that $u$ has the minimal order among its neighbors according to $\sigma$. This translates into $u$ having a state of $M$ under $\sigma$ and in particular, $u$ is not an element of $S'(\sigma)$, thus proving our base case.
%

For the induction step, consider a node $u\in\barS$, and assume the claim holds for every $w \in \barS \cap I_{\pi}(u)$. We consider two cases, depending on whether $u$ has a neighbor in $\S$ or not.

Case 1:  $u$ does not have any neighbor in $\S$.
If $u \in \NM$, then there is a node $z\in I_{\pi}(u)\cap \barS$ that is in $\M$ according to $\pi$. By the induction hypothesis, $z \in V \setminus S'(\sigma)$ and $z \in \M$ also according to $\sigma$. Since $\pi|_{\barS}=\sigma|_{\barS}$, we have that $u$ is in $\NM$ according to $\sigma$ too.
Otherwise, if $u \in \M$, then every $w\in I_{\pi}(u)$ (which is also $\barS$) is in $\NM$ according to $\pi$.
Any node $w \in I_{\sigma}(u)$ is also in $w \in I_{\pi}(u)$, since it is not in $\S$ and $\pi|_{\barS}=\sigma|_{\barS}$. The induction hypothesis on $w$ gives that it is also in $V \setminus S'(\sigma)$ (otherwise it would be in $S'_{\pi}=\S$ in contradiction to the assumption of case 1), and its state according to $\sigma$ is $\NM$. Hence, $u$ must be in $V \setminus S'(\sigma)$ as well, and in state $\M$ according to $\sigma$.

Case 2: Assume that $u$ has a neighbor $w\in\S$. Since $w\in\S$ then it is possible that after the algorithm $w$ will be in $\M$. Since two nodes in $\M$ cannot be neighbors and $u \not\in\S$, then $u$ must be in $\NM$ according to $\pi$. In this case there is a node $z\in I_{\pi}(u)\backslash\S$ that is in $\M$ according to $\pi$. By the induction hypothesis, $z \in V \setminus S'(\sigma)$ and $z \in \M$ also according to $\sigma$. Since $\pi|_{\barS}=\sigma|_{\barS}$, we have that $u$ is in $\NM$ and in $V \setminus S'(\sigma)$ according to $\sigma$ too.
\end{proof}

\begin{claim}
\label{claim:sigma-pi}
Let $P \subseteq V$ be a set of nodes, and let $\pi$ and $\sigma$ be two permutations such that $\pi|_{\S}=\sigma|_{\S}$ and $\pi|_{\barS}=\sigma|_{\barS}$. Assume $\pi \in \Pi_\S$. We have that $\S \subseteq S'(\sigma)$.
\end{claim}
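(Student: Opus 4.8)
My plan is to prove, by induction on the recursion level, that every layer $S'_i(\pi)$ of the set $S'(\pi)$ is contained in $S'(\sigma)$; since $\pi\in\Pi_\S$ means $S'(\pi)=\S$, this gives $\S\subseteq S'(\sigma)$, and together with Claim~\ref{claim:u-in-barS} (which yields $S'(\sigma)\subseteq\S$) it shows $\sigma\in\Pi_\S$, completing the equivalence these two claims are after. The base case is trivial: $S'_0(\pi)=\{v^*\}=S'_0(\sigma)$ by the first modification in the definition of $S'$. The only fact about the permutations I will use is that $\pi$ and $\sigma$ induce the same ordering on $\S$; consequently, for every $u\in\S$ one has $I_\pi(u)\cap\S = I_\sigma(u)\cap\S$, so any witness lying inside $\S$ for a membership in the $\pi$-recursion remains a valid witness in the $\sigma$-recursion.

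For the inductive step, fix $u\in S'_i(\pi)$ with $i\ge 1$ (so $u\in\S$) and split on the state of $u$. If $u\in\M$, then $u$ was admitted to $S'_i(\pi)$ because some $w\in S'_{i-1}(\pi)\cap I_\pi(u)$; such $w$ belongs to $\S$, hence also $w\in I_\sigma(u)$, and by the induction hypothesis $w\in S'(\sigma)$, so the $\M$-branch of the $\sigma$-recursion puts $u$ into $S'(\sigma)$ at the layer above $w$. If $u\in\NM$, it is enough to show that every $v\in I_\sigma(u)\cap\M$ already lies in $S'(\sigma)$: then, since that set is finite, the $\NM$-branch of the $\sigma$-recursion places $u$ into $S'(\sigma)$ at the layer just above the largest of those $v$'s. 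When $v\in\S$ this is routine --- $v\in I_\pi(u)\cap\M$ by the ordering fact, and because $u$ entered $S'(\pi)$ through its $\NM$-branch, $v$ sits in an earlier layer of $S'(\pi)$ and hence, by the induction hypothesis, in $S'(\sigma)$.

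The one case requiring a real argument rather than bookkeeping --- the main obstacle --- is excluding $v\in\barS$ in the previous step. For this I would use a structural property of $S'(\pi)$ alone: if $u\in\NM$ lies in $S'(\pi)$, then \emph{every} $\M$-neighbor $v$ of $u$ lies in $S'(\pi)$. Indeed, if $\pi(v)<\pi(u)$ this is precisely the condition under which $u$ was admitted via its $\NM$-branch; and if $\pi(v)>\pi(u)$, then, at the layer $i$ where $u$ was admitted, $u\in S'_i(\pi)\cap I_\pi(v)$ forces $v$ into $S'_{i+1}(\pi)$ through the $\M$-branch. Either way $v\in S'(\pi)=\S$, contradicting $v\in\barS$, so this case is vacuous. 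The remaining delicate point is that the $\M$/$\NM$ split must be consistent for $\pi$ and $\sigma$, i.e.\ that a node of $\S$ keeps its state when the order changes from $\pi$ to $\sigma$ --- the $\S$-analogue of the state-preservation established for $\barS$ in Claim~\ref{claim:u-in-barS}; I would fold its proof into the same induction (or, if the recursion for $S'(\tau)$ is read with the states of the real configuration held fixed, this point simply disappears). Assembling the layers gives $\S=\bigcup_i S'_i(\pi)\subseteq S'(\sigma)$, as desired.
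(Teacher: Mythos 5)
Your proof follows essentially the same route as the paper's: an induction showing that each node of $\S$ lands in $S'(\sigma)$, using Claim~\ref{claim:u-in-barS} to dispose of lower-ordered neighbors lying in $\barS$, together with the structural observation that an $\NM$-node of $S'(\pi)$ has all of its $\M$-neighbors inside $S'(\pi)$ (the paper invokes this only implicitly, as ``otherwise $u$ cannot be in $\S$''; you spell it out, which is a genuine improvement in readability). The only organizational difference is the induction variable --- recursion layer for you, $\pi$-order within $\S$ for the paper --- and that is immaterial.

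The one place where your argument is weaker is precisely the ``delicate point'' you flag at the end: your case split is on the state of $u$ under $\pi$, whereas the branch of the $\sigma$-recursion that must admit $u$ is dictated by $u$'s state under $\sigma$, and you leave the consistency of the two unestablished. The paper sidesteps this entirely by proving, for each $u\in\S$, the conjunction of two facts: (i) $I_\sigma(u)\cap S'(\sigma)\neq\emptyset$, and (ii) every $w\in I_\sigma(u)$ is either in $S'(\sigma)$ or in $\NM$ under $\sigma$. Fact (i) admits $u$ through the $\M$-branch if $u$ is in $\M$ under $\sigma$, and fact (ii) admits it through the $\NM$-branch if it is in $\NM$ under $\sigma$; so whichever state $u$ has under $\sigma$, it enters $S'(\sigma)$, and no state-preservation for nodes of $\S$ is ever needed. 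Your proposed fixes would also work --- in particular, state preservation on $\S$ does follow by an induction on the $\sigma$-order over all of $V$, using Claim~\ref{claim:u-in-barS} for the $\barS$-neighbors and the fact that two $\M$-nodes are never adjacent --- but as written your proof has this small hole, and it must be closed one way or the other; the paper's phrasing is the cheaper way to do it.
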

\begin{proof}
We prove that every node $u \in \S$ is also in $S'(\sigma)$ by induction on the order of nodes in $\S$ according to $\pi$ (which is equal to their order according to $\sigma$), with the modification forcing $v^*$ to be the first among the nodes of $P$.
The base case is for $v^*$, which is clearly in both sets $S'(\pi)$ and $S'(\sigma)$.
Consider a node $u \in \S$ and assume that the claim holds for every node in $\S$ which is ordered before $u$ according to $\pi$.
Since $u \in \S$ and $u \neq v^*$ there must be some $w \in I_\pi(u) \cap \S$ since $\pi|_{\S}=\sigma|_{\S}$ and $u \in \S$ we have according to our induction hypothesis that $w \in S'(\sigma)$, meaning that $I_\sigma(u) \cap S'(\sigma)$ is non-empty.

Consider now an arbitrary $w \in I_\sigma(u)$. If $w \in \S$ then since $\pi|_{\S}=\sigma|_{\S}$ and $u \in \S$ we have according to our induction hypothesis that $w \in S'(\sigma)$. If $w \notin \S$ then it must be the case that $w \in \NM$ according to $\pi$, otherwise $u$ cannot be in $\S$. We thus have according to Claim~\ref{claim:u-in-barS} that (1) $w \in V \setminus S'(\sigma)$ and (2) $w \in \NM$ according to $\sigma$. It follows that all neighbors of $u$ in $I_\sigma(u)$ are either in $S'(\sigma)$ or in $\NM$ according to $\sigma$, hence since $I_\sigma(u) \cap S'(\sigma) \neq \emptyset$ it must be the case that $u \in S'(\sigma)$.
\end{proof}

Claims~\ref{claim:u-in-barS} and~\ref{claim:sigma-pi} combined imply that if $\pi|_{\S}=\sigma|_{\S}$ and $\pi|_{\barS}=\sigma|_{\barS}$ then $\sigma \in \Pi_{\S}$ if and only if $\pi \in \Pi_{\S}$. We are now ready for the proof of Lemma~\ref{lem:one-over-P}.
\begin{proof}
\emph{(of Lemma~\ref{lem:one-over-P})~}
Given two permutations $\sigma^{+}$ and $\sigma^{-}$ on $\S\backslash\{v^*\}$ and $\barS$, respectively, we define $\rho_{\sigma^{+},\sigma^{-}}$ as 
$\rho_{\sigma^{+},\sigma^{-}} = \Pr\left[\forall u\in\S, \pi(v^*) \leq \pi(u) \mid\pi|_{\S\backslash\{v^*\}}=\sigma^{+} \mbox{ and } \pi|_{\barS}=\sigma^{-}\right].$


First, we observe that for two pairs of permutations $\sigma_{1}^{+},\sigma_{1}^{-}$ and $\sigma_{2}^{+},\sigma_{2}^{-}$ as above, it holds that $\rho_{\sigma_{1}^{+},\sigma_{1}^{-}}=\rho_{\sigma_{2}^{+},\sigma_{2}^{-}}$. This is because given the condition for $\sigma_{1}^{+},\sigma_{1}^{-}$, applying the permutation $(\sigma_{1}^{+})^{-1}\sigma_{2}^{+}$ to nodes in $\S\backslash\{v^*\}$ and applying the permutation $(\sigma_{1}^{-})^{-1}\sigma_{2}^{-}$ to nodes in $\barS$ has no affect on whether the event $\forall u\in\S, \pi(v^*)\leq\pi(u)$ holds.
Next, since $\Pr\left[\forall u\in\S, \pi(v^*)\leq\pi(u)\right]=\frac{1}{|\S|}$, we have that for any pair of permutations $\sigma^{+},\sigma^{-}$ on $\S\backslash\{v^*\}$ and $\barS$, respectively:
\vspace{-0.1in}
\begin{eqnarray*}
\frac{1}{|\S|} & = & \Pr\left[\forall u\in\S, \pi(v^*)\leq\pi(u)\right]=\sum_{\tau^{+},\tau^{-}}\rho_{\tau^{+},\tau^{-}}\Pr\left[\pi|_{\S\backslash\{v^*\}}=\tau^{+}\mbox{ and }\pi|_{\barS}=\tau^{-}\right]\\
 & = & \sum_{\tau^{+},\tau^{-}}\rho_{\sigma^{+},\sigma^{-}}\Pr\left[\pi|_{\S\backslash\{v^*\}}=\tau^{+}\mbox{ and }\pi|_{\barS}=\tau^{-}\right]=\rho_{\sigma^{+},\sigma^{-}}.
\end{eqnarray*}

Finally, Claims~\ref{claim:u-in-barS} and~\ref{claim:sigma-pi} imply that for every set $\S\subseteq V$ there is a set of $t=t_{\S}$ pairs of permutations $\{(\sigma_{1}^{+}, \sigma_{1}^{-}),\dots,(\sigma_{t}^{+}, \sigma_{t}^{-}) \}$ on $\S\backslash\{v^*\}$ and $\barS$, respectively, such that $\Pi_{\S}=\{\pi\mid \exists i, \pi|_{\S\backslash\{v^*\}}=\sigma_{i}^{+}\mbox{ and }\pi|_{\barS}=\sigma_{i}^{-}\}$. We conclude that for a given set $\S\subseteq V$:
\vspace{-0.2in}
\begin{eqnarray*}
\Pr_{\pi\in\Pi_{\S}}\left[\forall u\in\S, \pi(v^*)\leq\pi(u)\right] & = & \sum_{i=1}^{t}\rho_{\sigma_{i}^{+},\sigma_{i}^{-}}\Pr\left[\pi|_{\S\backslash\{v^*\}}=\sigma_{i}^{+}\mbox{ and }\pi|_{\barS}=\sigma_{i}^{-}\mid\pi\in\Pi_{\S}\right]
\end{eqnarray*}
\vspace{-0.25in}
\begin{eqnarray} \label{eq:vs_min}
 & = & \frac{1}{|\S|}\sum_{i=1}^{t}\Pr\left[\pi|_{\S\backslash\{v^*\}}=\sigma_{i}^{+}\mbox{ and }\pi|_{\barS}=\sigma_{i}^{-}\mid\pi\in\Pi_{\S}\right]=\frac{1}{|\S|}.
\end{eqnarray}
To complete the proof, we argue that knowing that $\pi(v^{**}) \leq \pi(v^*)$ can only decrease the probability that $\pi(v^*)\leq\pi(u)$ for all $u \in \S$. Formally,
\begin{eqnarray*}
 &  & \Pr_{\pi\in\Pi_{\S}}\left[\forall u\in\S,\pi(v^{*})\leq\pi(u)~|~\pi(v^{**})\leq\pi(v^{*})\right]\\
 & = & \Pr_{\pi\in\Pi_{\S}}\left[\forall u\in\S,\pi(v^{*})\leq\pi(u)\mbox{ and }\pi(v^{**})\leq\pi(u)~|~\pi(v^{**})\leq\pi(v^{*})\right]\\
 & = & \frac{\Pr_{\pi\in\Pi_{\S}}\left[\forall u\in\S,\pi(v^{*})\leq\pi(u)\mbox{ and }\pi(v^{**})\leq\pi(u) \mbox{ and }    \pi(v^{**})\leq\pi(v^{*})     \right]}{\Pr_{\pi\in\Pi_{\S}}\left[\pi(v^{**})\leq\pi(v^{*})\right]}\\
 & \leq & \frac{\Pr_{\pi\in\Pi_{\S}}\left[\forall u\in\S,\pi(v^{*})\leq\pi(u)\mbox{ and }\pi(v^{**})\leq\pi(u)\right]}{\Pr_{\pi\in\Pi_{\S}}\left[\pi(v^{**})\leq\pi(v^{*})\right]}
\end{eqnarray*}
To bound the above expression, we separate our discussion into three possible cases. In the first $v^{**} \neq v^*$ and $v^{**} \in P$. The value of the expression is clearly 0 in this case. In the second we have $v^* = v^{**}$ and according to Equation~\eqref{eq:vs_min} we have that the quantity is bounded by $1/|P|$. The last case is the one where $v^{**} \notin P$. Here, because $v^* \in P$ and $v^{**} \notin P$ we have that the events of $\pi(v^*)$ being the minimal in $\{\pi(u)\}_{u\in\S}$ and $\pi(v^{**})$ being the smaller than each of the elements of $\{\pi(u)\}_{u\in\S}$ are independent for uniform $\pi \in \Pi_{\S}$. This is due to the first event being dependent of the inner order inside $\S$ and the second being independent of the same inside order.
Hence,
\begin{eqnarray*} 
 &  & \frac{\Pr_{\pi\in\Pi_{\S}}\left[\forall u\in\S,\pi(v^{*})\leq\pi(u)\mbox{ and }\pi(v^{**})\leq\pi(u)\right]}{\Pr_{\pi\in\Pi_{\S}}\left[\pi(v^{**})\leq\pi(v^{*})\right]} \\
  & = & \frac{\Pr_{\pi\in\Pi_{\S}}\left[\forall u\in\S,\pi(v^{**})\leq\pi(u)\right]}{\Pr_{\pi\in\Pi_{\S}}\left[\pi(v^{**})\leq\pi(v^{*})\right]} \cdot \Pr_{\pi\in\Pi_{\S}}\left[\forall u\in\S,\pi(v^{*})\leq\pi(u)\right] \\
 & \leq & \Pr_{\pi\in\Pi_{\S}}\left[\forall u\in\S,\pi(v^{*})\leq\pi(u)\right]  \leq 1/|P|
\end{eqnarray*}
\vspace{-0.2in}

\end{proof}
\vspace{-0.1in}

Lemma~\ref{lem:SandS'} and Lemma~\ref{lem:one-over-P} immediately lead to Theorem~\ref{thm:ES-const}. Also, as an immediate corollary of Theorem~\ref{thm:ES-const} we get

\begin{corollary}
A direct distributed implementation of Algorithm~\ref{alg:template} has, in expectation, both a single adjustment and round, in both the synchronous and asynchronous models.
\end{corollary}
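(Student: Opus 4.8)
The plan is to obtain both complexity guarantees as essentially immediate consequences of Theorem~\ref{thm:ES-const}; there is no further probabilistic content, and the remaining work is purely in relating the abstract template to a distributed execution. For the adjustment‑complexity, I would first observe that Algorithm~\ref{alg:template} only ever rewrites the state of nodes in $S = S(\Gold,\Gnew,\pi)$ — node $v^*$ in step~1 and nodes of $\bigcup_i S_i$ in step~4. Hence the set of nodes whose output differs from its pre‑change value is contained in $S$, so the adjustment‑complexity is at most $|S|$, and Theorem~\ref{thm:ES-const} yields that its expectation over $\pi$ is at most $\E_\pi[|S|]\le 1$. (Some nodes of $S$ are in fact eventually restored to their original state, so the bound is usually not tight, but $\le 1$ is exactly what is claimed.)

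For the synchronous round‑complexity I would spell out the direct implementation: each node stores its own state and the $\pi$‑ranks of its current neighbors, so it can locally test the \MIS invariant. After the topology change, $v^*$ — the unique node at which the invariant may fail — recomputes and broadcasts its state; thereafter each node recomputes and broadcasts exactly once, as soon as it has received the final state of every neighbor that precedes it in $\pi$. This realizes exactly the schedule encoded by the indices $i_u$ of Algorithm~\ref{alg:template}: node $u$ broadcasts in round $i_u$, and the number of rounds equals the propagation depth $L = \max\{i : S_i \neq \emptyset\}$. The key structural step is then to show $L \le |S|$. Since $S_1,\dots,S_L$ are all nonempty and, in each wave $i$, at least one node makes its last appearance there (equivalently has $i_u = i$) — for otherwise every node not yet at its final state would already have seen the final states of all of its $\pi$‑predecessors and the propagation would have halted before wave $i$ — the $L$ waves together with $v^*$ witness $L+1$ distinct members of $S$. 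Thus $L \le |S|-1$, and the expected round‑complexity is at most $\E_\pi[|S|]\le 1$.

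For the asynchronous model I would reuse the same implementation, recalling that the round‑complexity there is the length of the longest chain of messages each of which causally triggers the next. Such a chain is a maximal directed path in the influence DAG underlying the wave decomposition $\{v^*\}=S_0 \subseteq S_0\cup S_1 \subseteq \cdots$, whose depth is again at most $L \le |S|$; so the expected round‑complexity in the asynchronous model is again at most $\E_\pi[|S|]\le 1$.

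I expect the main obstacle to lie not in the probability — Theorem~\ref{thm:ES-const} already supplies all of it — but in the distributed bookkeeping needed to make the protocol faithfully realize Algorithm~\ref{alg:template}: a node must update and broadcast only once, and only after all of its $\pi$‑smaller neighbors have reached their final states, which calls for a light termination‑detection mechanism carried along the propagating wave; and one must verify carefully the structural claim that the depth of that wave never exceeds $|S|$, the delicate points being that a node may belong to several waves $S_i$ and that singleton waves of $\M$‑nodes may toggle state before settling.
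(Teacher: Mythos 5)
Your proposal is correct and takes essentially the same route as the paper, which states this corollary as an immediate consequence of Theorem~\ref{thm:ES-const}: adjustments are bounded by $|S|$, and rounds (synchronous or asynchronous) by the propagation depth $L$, which is at most $|S|$. Your justification that every nonempty wave $S_i$ contains a node with $i_u=i$ is stated somewhat loosely but is repairable (the minimal-$\pi$ node of $S_i$ can never reappear in a later wave, since all later waves consist of nodes of strictly larger $\pi$-value); the paper itself derives the same depth bound more directly in Lemma~\ref{lem:abrupt:rounds}, by exhibiting a strictly $\pi$-increasing chain from $v^*$ to any node of $S_i$, whose vertices are distinct members of $S$.
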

%

\section{A Constant Broadcast Implementation} \label{sec:dist_impl}

Theorem~\ref{thm:ES-const} promises that the expected number of nodes that need to change their output according to our template algorithm is $1$. However, a direct implementation of the template in Algorithm~\ref{alg:template} in a dynamic distributed setting may require a much larger broadcast complexity because it may be the case that a node needs to change its state several times until the \MIS invariant holds at all nodes. This is because a node can be in more than a single set $S_i$, as discussed in the previous section. In such a case, despite the fact that the expected number of nodes in $S$ is a constant, it may be that the expected number of state changes is much larger. Specifically, in a naive implementation, the number of broadcasts may be as large as $|S|^2$. Hence, although $\E[|S|]=1$, the expected number of broadcasts may be as large as $n$.

We thus take a different approach for implementing the template in Algorithm~\ref{alg:template}, in the \emph{synchronous} setting, where each node waits until it knows the maximal $i$ for which it belongs to $S_i$, and changes it state only once. This allows to obtain, for almost all of the possible topology changes a constant broadcast complexity at the cost of
a constant, rather than single, round complexity.

In order to implement the random permutation $\pi$ we assume each node $v \in V$ has a uniformly random and independent ID $\ID_v \in [0,1]$. We will maintain the property that each node has knowledge of its $\ID$ value and those of its neighbors.
We describe our algorithm in Algorithm~\ref{alg:state_protocol}. This directly applies to the following topology changes: edge-insertion, graceful-edge-deletion, abrupt-edge-deletion, graceful-node-deletion and node-unmuting. An extension of the analysis is provided in Subsection~\ref{subsec:abrupt} for the case of an abrupt node deletion, and a slight modification is provided in Subsection~\ref{subsec:node-insertion} for the case of node-insertion. The following summarizes the guarantees of our implementation, and is proven in Lemmas~\ref{lem:bound_by_S},~\ref{lem:node-insertion}, and~\ref{lem:abrupt}.
.

\begin{thm} \label{thm:main_formal}
There is a complete fully dynamic distributed MIS algorithm which requires in expectation a single adjustment and $O(1)$ rounds for all topology changes. For edge insertions and deletions, graceful node deletion, and node unmuting, the algorithm requires $O(1)$ broadcasts, for an abrupt deletion of a node $v^*$ it requires $O\left(\min\{\log(n), d(v^*)\}\right)$ broadcasts, and for an insertion of a node $v^*$ it requires $O(d(v^*))$ broadcasts, in expectation.
\end{thm}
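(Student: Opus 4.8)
The plan is to prove Theorem~\ref{thm:main_formal} by designing Algorithm~\ref{alg:state_protocol} so that every node in the influence set $S$ broadcasts only $O(1)$ times, and then invoking Theorem~\ref{thm:ES-const} to conclude that the expected total is $O(1)$. The key structural idea is that, in the synchronous setting, instead of letting a node flip its state each time one of its lower-ID neighbors changes (which can cost $|S|^2$ broadcasts in the worst case), each node $u \in S$ should change its state exactly once, namely at the ``round'' corresponding to $i_u = \max\{i \mid u \in S_i\}$. To make this implementable with local knowledge, I would have a node wake up when it receives a notification from a lower-ID neighbor, then wait until it has heard the final decision of \emph{all} of its lower-ID neighbors that are themselves in $S$ before computing and broadcasting its own final state. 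Since a node's final state is determined by the final states of its neighbors in $I_\pi(u)$, and propagation only moves forward in ID order, this wait terminates; the ``path of communication'' along which notifications propagate has length $O(|S|)$, but the number of \emph{broadcasts} is exactly $|S|$ (each influenced node broadcasts once), plus the single broadcast by $v^*$ (or its trigger) and a constant amount of bookkeeping for the triggering endpoint $v^{**}$.

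The main steps I would carry out are: (1) formally specify the wake-up / wait / decide protocol, including how a node learns the set of lower-ID neighbors it must wait for (for edge and graceful-node changes this is immediate from the local topology and the IDs it already stores; it should also learn when a lower-ID neighbor has \emph{stabilized without changing}, so it does not wait forever); (2) prove correctness, i.e., that upon termination the \MIS invariant holds everywhere --- this follows by induction on ID order, exactly mirroring the recursive definition of $S_i$ in Equation~\eqref{eq:defS}; (3) prove the per-change broadcast bound of $O(|S|)$ by arguing that only nodes in $S$ (together with $v^{**}$, and $v^*$'s trigger) ever broadcast, and each does so $O(1)$ times --- this is Lemma~\ref{lem:bound_by_S}; (4) take expectations using Theorem~\ref{thm:ES-const} to get $O(1)$ expected broadcasts and, since the communication path has length at most $|S| = O(1)$ in expectation, $O(1)$ expected rounds and a single expected adjustment; (5) handle the two exceptional topology changes separately. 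For node insertion of $v^*$ with degree $d(v^*)$, the new node must announce itself to all $d(v^*)$ neighbors, which is an unavoidable $O(d(v^*))$ term, and then the analysis reduces to the template run (Subsection~\ref{subsec:node-insertion}, Lemma~\ref{lem:node-insertion}). For abrupt node deletion of $v^*$, the neighbors of $v^*$ discover the deletion locally but must coordinate to determine which of them (if any) should enter $M$; simulating the fictitious ``$v^*$ still present'' intermediate stage of the template requires the lowest-ID former neighbor to act, costing $O(\min\{\log n, d(v^*)\})$ broadcasts --- the $\log n$ coming from a binary-search / tournament among the neighbors to find the minimum ID, and the $d(v^*)$ being the trivial bound of everyone announcing (Subsection~\ref{subsec:abrupt}, Lemma~\ref{lem:abrupt}).

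The hard part, I expect, will be step (1)--(2): getting the wait condition exactly right so that (a) a node never deadlocks waiting for a neighbor that has already silently stabilized, (b) a node does not prematurely decide before a relevant lower-ID neighbor has made its \emph{final} decision (recall a node can be in several $S_i$'s, so an early-index decision is not final), and (c) the whole thing still uses only $O(\log n)$-bit messages and, with the \cite{MetivierRSZ11}-style ID-comparison trick, only $O(1)$ bits per broadcast in expectation. Concretely, the subtlety is that when $u$ hears a lower-ID neighbor $w$ change into $M$, $u$ must switch to $\NM$, but $u$ cannot yet announce this as final because a still-lower-ID (in $S$) neighbor of $u$ might later force $w$ back out; the fix is to have each node broadcast only after \emph{all} lower-ID neighbors in $S$ have sent a ``final'' flag, and to have the propagation carry enough information (essentially a bit saying ``I am now final'') so that waiting nodes know when to proceed. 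Once this protocol is nailed down, correctness and the $O(|S|)$ broadcast bound are straightforward inductions, and the probabilistic conclusions are immediate consequences of Theorem~\ref{thm:ES-const} together with Markov's inequality for the high-probability caveats already discussed in the introduction.
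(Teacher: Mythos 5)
Your core protocol matches the paper's: replace the naive ``flip whenever a lower-order neighbor flips'' behavior with intermediate waiting states so that each node in $S$ changes its output once and broadcasts $O(1)$ times, prove this gives $O(|S|)$ broadcasts and rounds, and then apply Theorem~\ref{thm:ES-const}. (The paper realizes your ``wake-up / wait / decide'' idea with an explicit four-state machine $\M,\NM,\C,\R$: the $\C$ wave propagates up in $\pi$-order, a node moves to $\R$ only once no higher-order neighbor is in $\C$, and final decisions then propagate back down; Lemma~\ref{lem:Ronce} is exactly your claim that no node finalizes twice. Your worry (a) about deadlocking on silently-stable neighbors is resolved by this two-phase structure rather than by ``final'' flags from lower-order neighbors in $S$, which a node cannot identify in advance.)

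There is, however, a genuine gap in your treatment of abrupt node deletion, and it is the one case where the stated bound actually requires a new idea. You attribute the $O(\min\{\log n, d(v^*)\})$ cost to a ``binary-search / tournament among the neighbors of $v^*$ to find the minimum ID.'' This is not the difficulty, and the tournament is not even implementable as described, since the neighbors of a deleted node need not be adjacent to one another and can no longer relay through $v^*$. The real issue is that an abrupt deletion creates up to $d(v^*)$ \emph{simultaneous sources} $v^*_1,\dots,v^*_x$ (all of $S_1$) that start the update wave at once, and with multiple sources a single node can be pulled back into the pending state $\C$ several times, so the per-node $O(1)$ broadcast bound from the single-source case no longer holds. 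The paper closes this with a timing argument (Lemma~\ref{lem:abrupt-t}): a node entering $\C$ at round $t$ cannot finalize before round $3t+1$, so its successive entries into $\C$ occur at rounds growing geometrically, giving at most $\log_3(O(|S|))$ re-entries; and each re-entry is chargeable to a distinct $v^*_i$, giving the bound $x\le d(v^*)$ as well. Together these yield $O(|S|\min\{\log |S|, d(v^*)\})$ broadcasts, whose expectation is $O(\min\{\log n, d(v^*)\})$ by Theorem~\ref{thm:ES-const}. Without some argument bounding the number of times a node re-enters the pending state under simultaneous sources, your claimed broadcast bound for abrupt deletion is unsupported.
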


In the algorithm a node may be in one of four states: $\M$ for an MIS node, $\NM$ for a non-MIS node, $\C$ for a node that may need to change from $\M$ to $\NM$ or vice-versa, and $\R$ for a node that is ready to change. We will sometimes abuse notation and consider a state as the set of nodes which are in that state. Our goal is to maintain the \MIS invariant.

\begin{algorithm}[h!]
\caption{MIS Algorithm for node $v$}
\begin{algorithmic}[1]
\STATE  $v \in \M$: If some $u \in I_\pi(v)$ changes to state $\C$, change state to $\C$.
\STATE  $v \in \NM $: If some $u  \in I_\pi(v)$ changes to state $\C$ and all other $w  \in I_\pi(v)$ are not in $\M$, change state to $\C$.
\STATE  $v \in \C $: If (1) all neighbors $u$ with $\pi(v) < \pi(u)$ are not in state $\C$ and (2) $v$ changed to state $\C$ at least $2$ rounds ago, change state to $\R$.
\STATE $v \in \R $: If all  $u  \in I_\pi(v)$ are in states $\NM$ or $\M$, change state to $\M$ if all $u  \in I_\pi(v)$ are in $\NM$, and change state to $\NM$ otherwise.
\end{algorithmic}
\label{alg:state_protocol}
\end{algorithm}

Any change of state of a node is followed by a broadcast of the new state to all of its neighbors. We now define our implementation as a sequence of state changes.
When a topology change occurs at node $v^*$, if the \MIS invariant still holds then $v^*$ does not change its state and algorithm consists of doing nothing. Otherwise, $v^*$ changes its state to $\C$.


From states $\M$ or $\NM$, a node changes to state $\C$ when it discovers it is in the set $S$ of influenced nodes, as defined in Equation~\eqref{eq:defS}.  From state $\C$, a node $v$ changes to state $\R$ when (1) none of its neighbors $u$ for which $\pi(v) < \pi(u)$ are in state $\C$ and (2) $v$ changed its state to $\C$ at least two rounds ago. Finally, from state $\R$ a node $v$ returns to states $\M$ or $\NM$ when all of its neighbors $u$ for which $\pi(u) < \pi(v)$ are in states $\M$ or $\NM$.
In order to bound the complexity of the algorithm we first show that every node can change from state $\R$ to either $\M$ or $\NM$ at most once.

\begin{lem} \label{lem:Ronce}
In Algorithm~\ref{alg:state_protocol}, a node $u$ changes its state from $\R$ to another state at most once.
\end{lem}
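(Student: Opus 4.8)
The plan is to show that once a node $u$ enters state $\R$, any subsequent transition it makes (to $\M$ or $\NM$, via line~4) returns it to a ``stable'' state and, crucially, that the triggering condition for re-entering $\C$ (line~1 or line~2) can never fire again. The key structural fact I would exploit is the ordering discipline built into the algorithm: a node $v$ only leaves $\C$ for $\R$ after \emph{all} its higher-$\pi$ neighbors have left $\C$ (condition~(1) of line~3), and condition~(2) provides a two-round buffer so that information about lower-$\pi$ neighbors has propagated. I would first argue that the set of nodes that ever enter state $\C$ during one execution is exactly the set $S$ of influenced nodes (this follows from the correspondence set up in Section~3, since the recursion defining $S$ in Equation~\eqref{eq:defS} mirrors precisely the propagation rules in lines~1--2). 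Then I would process the nodes of $S$ in increasing $\pi$-order and prove by induction that each one enters $\C$ once, then $\R$ once, then its final state once, and never cycles back.

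The heart of the induction is this: consider a node $u$ that has just transitioned from $\R$ to (say) $\M$ at line~4. At that moment all of $u$'s neighbors $w \in I_\pi(u)$ are in state $\M$ or $\NM$ — and in fact in $\NM$, since $u$ became $\M$. I would show that each such $w$ has by now reached its \emph{final} state: by the induction hypothesis (on $\pi$-order), any $w \in S$ with $\pi(w) < \pi(u)$ has already completed its single $\R \to \{\M,\NM\}$ transition, and any $w \notin S$ never changes state at all. Hence the lower-$\pi$ neighborhood of $u$ is frozen, so $u$'s own state at line~4 is its final value and $u$ will never again satisfy the trigger at line~1 or line~2 (those require a lower-$\pi$ neighbor to \emph{change} into $\C$, which can no longer happen). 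For $u$ to re-enter $\R$ it would have to first re-enter $\C$, which we have just ruled out; therefore $u$ leaves $\R$ at most once. The base case is the minimal-$\pi$ node of $S$, whose lower-$\pi$ neighbors are all outside $S$ and thus permanently stable, so the same argument applies directly.

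I would also need the companion observation that a node in $\C$ moves to $\R$ exactly once (not back and forth), which is where condition~(1) — all higher-$\pi$ neighbors out of $\C$ — together with the monotonicity established above does the work: once a higher-$\pi$ neighbor leaves $\C$ it never returns (again by induction, this time running in the opposite $\pi$-direction, or more cleanly by noting that the whole cascade in $S$ terminates), so condition~(1) once satisfied stays satisfied. Condition~(2)'s two-round delay is what guarantees that when $u$ finally does reach $\R$, all the ``ripple'' from its lower-$\pi$ side in $S$ has already been absorbed, so line~4 fires with correct information and does not need a retry.

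The main obstacle I anticipate is handling the interaction between the two directions of propagation cleanly: the trigger for entering $\C$ flows from low-$\pi$ to high-$\pi$ (lines~1--2 look at $I_\pi(v)$), while the release from $\C$ to $\R$ is gated by high-$\pi$ neighbors (condition~(1) of line~3), and the final resolution at line~4 again looks low-$\pi$. An overly naive induction could be circular. I would resolve this by fixing a single induction on the $\pi$-order of nodes in $S$ and being careful to state the hypothesis strongly enough — namely that every node $w$ with $\pi(w) < \pi(u)$ in $S$ has \emph{permanently} settled into its final state before $u$ makes its final move — and then verifying that each of lines~1--4, as applied to $u$, is consistent with that hypothesis and extends it to $u$. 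The two-round buffer of condition~(2) is exactly the slack needed to make the timing argument go through, and I would point to it explicitly at the step where $u$ transitions $\C \to \R$.
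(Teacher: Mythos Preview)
Your inductive approach on $\pi$-order contains a genuine gap. You claim that when $u$ exits $\R$, ``any $w \in S$ with $\pi(w) < \pi(u)$ has already completed its single $\R \to \{\M,\NM\}$ transition.'' But your induction hypothesis only asserts that each such $w$ exits $\R$ \emph{at most once}; it does not establish that this exit has \emph{already happened} by the time $u$ exits $\R$. Line~4 lets $u$ leave $\R$ as soon as every $w \in I_\pi(u)$ is in $\M$ or $\NM$, and this is satisfied if $w$ is still in its \emph{original} state, never yet having entered $\C$. If some $w \in I_\pi(u) \cap S$ is reached by the $\C$-wave only via a long $\pi$-increasing detour (through nodes not adjacent to $u$), then $w$ may enter $\C$ \emph{after} $u$ has already exited $\R$, at which point $u$ would be re-triggered into $\C$ by line~1 or~2. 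Your appeal to the two-round buffer does not close this hole: condition~(2) of line~3 delays only the $\C \to \R$ transition, and together with condition~(1) it gates on \emph{higher}-$\pi$ neighbors; neither constrains the lower-$\pi$ side that line~4 examines.

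The paper sidesteps this with a global rather than node-by-node argument. It first observes that any $u \neq v^*$ exiting $\R$ must have had the lower-$\pi$ neighbor that triggered its entry into $\C$ complete the full cycle $\C \to \R \to \M/\NM$ strictly earlier (otherwise line~4 cannot fire for $u$); hence $v^*$ is necessarily the \emph{first} node to exit $\R$. The key step is then that at the moment $v^*$ exits $\R$, \emph{every} node of $S$ is already in state $\R$, so no node anywhere is in $\C$. Since re-entry to $\C$ requires a lower-$\pi$ neighbor currently changing to $\C$, no node can ever re-enter $\C$ thereafter, and the lemma follows in one stroke. This single synchronization moment is exactly what your local induction cannot manufacture; to make your outline work you would have to prove this global fact first, after which the inductive scaffolding is no longer needed.
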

\begin{proof}
First, note that every $u \notin S$ never changes its state.
Consider a node $u$ changing its state from $\R$ to either $\M$ or $\NM$. Since $u$ changes from state $\R$, if $u \neq v^*$  then it must have a neighbor $w \in I_\pi(u)$ that was in state $\C$, changed to state $\R$ and then changed to $\M$ or $\NM$. It follows that $v^*$ must be the first node to change its state from $\R$ to $\M$ or $\NM$. This event occurs only when all neighbors $u$ of $v^*$ are not in $\C$, which in turn can happen only when all neighbors of each such $u$ with higher $\pi$ value have changed from $\C$ to $\R$ at least once. But, since no node could have changed its state from $\R$ to another state before $v^*$ has done so, we have that when $v^*$ changes its state from $\R$ to another, all $u \in S$ are in state $\R$.

In particular, we have that at the round of the first change of a node from $\R$ to another state, there are no nodes in state $\C$. Since a node can only change to state $\C$ due to a neighbor at state $\C$ we have that any node changing its state from $\R$ to $\M$ or $\NM$ will not change its state again, thus proving our claim.
\end{proof}

\begin{lem}\label{lem:bound_by_S}
For edge-insertion, graceful-edge-deletion, abrupt-edge-deletion, graceful-node-deletion and node-unmuting, Algorithm~\ref{alg:state_protocol} requires in expectation a single adjustment, $O(1)$ rounds, and $O(1)$ broadcasts.
\end{lem}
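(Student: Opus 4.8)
The plan is to leverage Theorem~\ref{thm:ES-const} together with Lemma~\ref{lem:Ronce}, reducing all three complexity measures to the single quantity $|S|$, whose expectation we already know is at most $1$. The key observation is that in Algorithm~\ref{alg:state_protocol}, the only nodes that ever change their state are those in $S$: a node enters state $\C$ only when it discovers membership in the set of influenced nodes defined by Equation~\eqref{eq:defS}, and a node outside $S$ never leaves $\M$ or $\NM$. Hence the adjustment complexity — the number of nodes whose final output ($\M$ versus $\NM$) differs — is bounded by $|S|$, and by Theorem~\ref{thm:ES-const} its expectation is at most $1$. I would state this first, as it is the easiest of the three.

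Next I would handle the broadcast complexity. Every broadcast is triggered by a state change, so it suffices to bound the total number of state changes. A node in $S$ passes through the states $\M/\NM \to \C \to \R \to \M/\NM$; the potentially dangerous step is that a node might oscillate, re-entering $\C$ after leaving $\R$, which in the naive template caused the $|S|^2$ blowup. Lemma~\ref{lem:Ronce} rules this out: once any node makes the transition from $\R$ to $\M$ or $\NM$, no node is in state $\C$, and since entering $\C$ requires a neighbor in $\C$, no further $\C$-transitions — and hence no further state changes — can occur. Therefore each node in $S$ changes state at most three times (into $\C$, into $\R$, back into $\M/\NM$), so the total number of broadcasts is at most $3|S| = O(|S|)$, whose expectation is $O(1)$ by Theorem~\ref{thm:ES-const}.

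For the round complexity I would argue as follows. The propagation of state $\C$ follows exactly the recursive structure of the sets $S_i$ in Equation~\eqref{eq:defS}: after $i$ rounds, all of $\bigcup_{j\le i} S_j$ has been reached, so the $\C$-wave stabilizes within $\max_u i_u \le |S|$ rounds. Then the $\R$-wave, and then the final $\M/\NM$-wave, each propagate along chains of nodes in $S$, again in at most $O(|S|)$ rounds. So the round complexity is $O(|S|)$ in the worst case, and by Theorem~\ref{thm:ES-const} and linearity its expectation is $O(1)$. (One should note the small subtlety that the ``wait two rounds'' condition in line~3 only adds an additive constant per node along a propagation chain of length $O(|S|)$, so it does not change the $O(|S|)$ bound.)

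The main obstacle — and the place where I would spend the most care — is making the round-complexity bound rigorous: one must check that the length of each propagation chain is genuinely controlled by $|S|$ and not, say, by the diameter of the graph or by subtle interleavings of the $\C$-, $\R$-, and $\M/\NM$-waves caused by the two-round delay in line~3. The clean way to do this is to define, for each node $u \in S$, the round at which it enters $\C$ (which is $i_u'$ for the appropriate indexing from Equation~\eqref{eq:defS}), the round at which it enters $\R$, and the round at which it settles, and to prove by induction along $\pi$ (or along the recursion depth) that all three are bounded by $O(|S|)$, using Lemma~\ref{lem:Ronce} to guarantee no node is revisited. Once that bookkeeping is in place, all three claimed expectation bounds follow by taking expectations and invoking $\E_\pi[|S|] \le 1$.
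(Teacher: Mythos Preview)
Your proposal is correct and matches the paper's argument almost exactly for the adjustment and broadcast complexities: only nodes in $S$ ever change state, Lemma~\ref{lem:Ronce} gives exactly three state changes per such node, and Theorem~\ref{thm:ES-const} finishes both bounds.

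The only real difference is in the round-complexity argument. You propose a wave-by-wave analysis (the $\C$-wave, then the $\R$-wave, then the final $\M/\NM$-wave, each of length $O(|S|)$), and you correctly flag the inductive bookkeeping this would require. The paper sidesteps all of that with a one-line observation: the round complexity is bounded by the total number of state changes plus $1$, because in every round until termination at least one node must change state, with the single possible exception of one idle round caused by the ``at least $2$ rounds ago'' condition in line~3. Since the total number of state changes is $3|S|$ by Lemma~\ref{lem:Ronce}, this immediately yields rounds $\le 3|S|+1 = O(|S|)$. Your approach works, but the paper's shortcut removes the need for the induction along $\pi$ that you anticipated.
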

\begin{proof}
Since only nodes in $S$ can change their outputs, the number of adjustments is bounded by $|S|$, and hence is $1$ in expectation, by Theorem~\ref{thm:ES-const}.
According to Lemma~\ref{lem:Ronce}, if a node changes its state then it does so exactly three times. First it changes from either $\M$ or $\NM$ to $\C$, then it changes to $\R$, and finally it changes to either $\M$ or $\NM$ again. Since only nodes in $S$ change their states and since the round and broadcast complexities are clearly bounded by the number of state changes plus 1 (due to the forced waiting round before changing from $\C$ to $\R$), the claim follows.
\end{proof}

\subsection{Node and Edge Insertion}
\label{subsec:node-insertion}
When $v^*$ is inserted, or an edge $(v^*,v^{**})$ is inserted, we assume that all new pairs of neighbors are notified that they are now connected, along with each other's ID. In a setting where this is not the case, we make the following adjustment, before applying Algorithm~\ref{alg:state_protocol}.

When $v^*$ is inserted, in the first round, $v^*$ broadcasts its random $\ID$ value and a temporary state $\NM$ to its neighbors. In the second round, the neighbors of $v^*$ broadcast their states and their $\ID$ values. Now, it may be the case that the \MIS invariant does not hold at $v^*$, but it still holds for any other node in the graph. We now execute Algorithm~\ref{alg:state_protocol}. The expected number of adjustments hence remains $1$, the number of rounds increases by two and is therefore still $O(1)$, and the number of broadcasts is now bounded by the degree $d(v^*)$.

When an edge $(v^*,v^{**})$ is inserted, in the first round, $v^*$ and $v^{**}$ broadcast their random $\ID$ value and state. Now it may be the case that the \MIS invariant no longer holds at $v^*$, but it still holds for any other node in the graph. We then execute Algorithm~\ref{alg:state_protocol}. The expected number of adjustments hence remains $1$, the number of rounds increases by one and is therefore still $O(1)$, and the number of broadcasts is still bounded by $O(1)$.

We therefore get the following:
\begin{lem}\label{lem:node-insertion}
For a node-insertion of a node $v^*$, our algorithm requires in expectation a single adjustment, $O(1)$ rounds, and $O(d(v^*))$ broadcasts.
For an edge-insertion of an edge $(v^*,v^{**})$, our algorithm requires in expectation a single adjustment, $O(1)$ rounds, and $O(1)$ broadcasts.
\end{lem}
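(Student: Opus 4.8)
\textbf{Proof proposal for Lemma~\ref{lem:node-insertion}.}

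The plan is to reduce both cases to the already-established Lemma~\ref{lem:bound_by_S} by adding a short preamble that restores the precondition of Algorithm~\ref{alg:state_protocol}, namely that the \MIS invariant fails at exactly one node $v^*$ and that this node (and all relevant neighbors) know the necessary $\ID$ values. The crux is to argue that after the preamble, the configuration is precisely one to which Lemma~\ref{lem:bound_by_S} (equivalently Theorem~\ref{thm:ES-const}) applies, so the adjustment count is bounded by $\E_\pi[|S|] \le 1$, and the round and broadcast counts increase only by the additive constant cost of the preamble.

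First I would handle the edge-insertion case, which is the cleaner one. When the edge $(v^*, v^{**})$ with $\pi(v^{**}) < \pi(v^*)$ is inserted, the only node whose \MIS invariant can break is $v^*$ (it might now have a lower-order neighbor in $\M$). I would spend one round in which $v^*$ and $v^{**}$ exchange their $\ID$ values and current states; this gives $v^*$ the information it needs to decide whether its invariant still holds. If it does, the algorithm does nothing and all three measures are $O(1)$ trivially; if not, $v^*$ changes to state $\C$ and we invoke Algorithm~\ref{alg:state_protocol}. By Lemma~\ref{lem:bound_by_S}, the number of adjustments, rounds, and broadcasts of that invocation are $1$, $O(1)$, and $O(1)$ in expectation respectively; adding the single preamble round and the corresponding $O(1)$ broadcasts keeps everything at the claimed bounds.

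Next I would handle node-insertion, which costs slightly more because $v^*$ arrives with up to $d(v^*)$ edges and neither $v^*$ nor its new neighbors know the relevant $\ID$s. The preamble now takes two rounds: in the first, $v^*$ broadcasts its $\ID$ and a provisional state $\NM$; in the second, each neighbor of $v^*$ broadcasts its own state and $\ID$. After these two rounds, every node except possibly $v^*$ still satisfies the \MIS invariant (the neighbors of $v^*$ are unaffected by a new $\NM$ neighbor, since $\NM$ nodes impose no constraint), and $v^*$ now knows enough to check its own invariant. From here we again invoke Algorithm~\ref{alg:state_protocol} and apply Lemma~\ref{lem:bound_by_S}. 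The adjustment count stays $\le 1$ in expectation and the round count increases by two; the broadcast count of the preamble is $d(v^*)+1$ (one broadcast by $v^*$, one by each neighbor), which dominates the $O(1)$ broadcasts of the main procedure, giving the stated $O(d(v^*))$ bound.

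The main obstacle — though it is conceptual rather than computational — is verifying carefully that after the preamble the configuration is genuinely a valid input to Algorithm~\ref{alg:state_protocol}: that the invariant fails at \emph{at most one} node, that node being $v^*$, and that the needed local $\ID$ knowledge is in place so that the subsequent execution faithfully realizes the set $S$ of influenced nodes from Equation~\eqref{eq:defS}. In the node-insertion case one must also double-check that starting $v^*$ in the provisional state $\NM$ rather than $\M$ does not perturb any other node's invariant before the main procedure runs, which holds precisely because a non-MIS node never forces a state change in its neighbors. Once this bookkeeping is in place, the quantitative bounds follow immediately from Theorem~\ref{thm:ES-const} and Lemma~\ref{lem:bound_by_S} together with the constant additive overhead of the preamble.
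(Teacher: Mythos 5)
Your proposal is correct and follows essentially the same route as the paper: a one-round preamble for edge insertion and a two-round preamble for node insertion (with $v^*$ broadcasting its $\ID$ and a provisional $\NM$ state, then neighbors replying with their states and $\ID$s), after which Algorithm~\ref{alg:state_protocol} is invoked and the bounds follow from Lemma~\ref{lem:bound_by_S}. Your extra remark that the provisional $\NM$ state cannot break any neighbor's invariant is a worthwhile explicit justification of a point the paper leaves implicit.
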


\subsection{Abrupt Node Deletion}
\label{subsec:abrupt}
Consider a node $v^*$ that is abruptly deleted. We denote by $v^*_1,v^*_2,\ldots,v^*_x$ the set $S_1=S(\Gold,\Gnew,\pi,v^*)_1$.
We execute Algorithm~\ref{alg:state_protocol}, where in the first round, every $v^*_i$, $1\leq i \leq x$ changes its state to $\C$ (instead of having $v^*$ broadcast its change state to $\C$).
It is straightforward to verify that despite the above modification, only nodes in $S=S(\Gold,\Gnew,\pi,v^*)$ can change to state $\C$ throughout the execution. With this modification, a node may change to state $\C$ more than once. However, we show the amount of times this can happen is bounded by  both the degree of $v^*$ and by $\log(n)$.

%

\begin{lem}\label{lem:abrupt:rounds}
The algorithm completes after at most $3|S|+2$ rounds.
\end{lem}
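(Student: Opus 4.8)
The plan is to bound the number of rounds Algorithm~\ref{alg:state_protocol} takes (with the abrupt-node-deletion modification) by tracking, round by round, a measure of progress tied to the structure of $S$. The key observation is that by Lemma~\ref{lem:Ronce} every node in $S$ changes its state at most three times: from $\M$ or $\NM$ to $\C$, then to $\R$, then back to $\M$ or $\NM$. So the total number of state changes is at most $3|S|$, and since each round in which anything happens must involve at least one state change (except possibly the forced one-round wait built into the $\C\to\R$ transition), the total number of rounds should be roughly $3|S|$ plus a small additive slack for these forced waits.

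First I would make precise the claim that in the abrupt-deletion modification, only nodes in $S$ ever change state. This follows from the definition of $S_1$ being exactly the set of neighbors of $v^*$ whose invariant breaks, so that initializing all of $S_1$ directly to $\C$ produces exactly the same reachable set of $\C$-nodes as the original algorithm would have, namely $S$. Hence the adjustment count is still at most $|S|$. Next I would argue that although a node may now enter state $\C$ more than once (because with $v^*$ gone, the chain of corrections can oscillate), each \emph{distinct} node still makes at most $O(1)$ net state transitions per ``level'' — but actually the cleaner route is: once a node leaves $\R$ it never changes again (this is exactly the conclusion of the proof of Lemma~\ref{lem:Ronce}, which shows that at the first $\R\to\{\M,\NM\}$ transition there are no $\C$-nodes left, so no further $\C$-transitions are triggered). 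So the relevant quantity is the number of rounds until the first node leaves $\R$, plus $O(|S|)$ more rounds to drain the rest.

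The main technical step is to bound the number of rounds before the first $\R$-exit by something like $3|S|$. Here I would use a potential/charging argument: in any round before that point, either some node in $S$ changes state (there are at most $3|S|$ such changes total), or we are in one of the forced ``wait one round'' steps of the $\C\to\R$ rule, or a $\C$-node is waiting for a higher-$\pi$ neighbor to leave $\C$. For the last case, I would observe that the highest-$\pi$ node currently in $\C$ always satisfies condition (1) of the $\C\to\R$ rule and hence makes progress (modulo the two-round wait), so there cannot be two consecutive rounds in which no node of $S$ changes state. Combining, the number of rounds is at most $2$ (for the initial setup and the forced wait at the very start) plus at most $3|S|$ (one per state change, with at most a constant-factor slack for the interleaved wait rounds), giving the stated bound of $3|S|+2$.

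The hard part will be handling the re-entry of nodes into state $\C$: a superficial count would say a node could cycle $\C\to\R\to\{\M,\NM\}\to\C$ many times, which would break the $3|S|$ bound. The resolution is to invoke the structure from Lemma~\ref{lem:Ronce} carefully — namely that no node leaves $\R$ until \emph{all} of $S$ is simultaneously in $\R$ with no $\C$-nodes remaining — so the entire ``correction wave'' completes its $\C$ and $\R$ phases before any node commits, and after that point no node re-enters $\C$. Thus each node still realizes only its three transitions, and the round count is genuinely linear in $|S|$; the $+2$ absorbs the forced initial wait round and the setup round for the abrupt deletion. I would want to double-check that the two-round delay in the $\C\to\R$ rule contributes only an additive $O(1)$ and not a multiplicative factor — this should follow because the delays at different nodes along a propagation path overlap rather than stack, since once a node is in $\C$ its two-round clock runs concurrently with everything downstream.
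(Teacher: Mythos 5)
There is a genuine gap: your argument leans on Lemma~\ref{lem:Ronce} (each node makes at most three state changes, and no node re-enters $\C$ after the first $\R\to\{\M,\NM\}$ transition), but that lemma's proof does not survive the abrupt-deletion modification that Lemma~\ref{lem:abrupt:rounds} is about. The proof of Lemma~\ref{lem:Ronce} hinges on there being a \emph{single} source $v^*$ that must be the first node to leave $\R$, which forces all of $S$ to be simultaneously in $\R$ before anyone commits. In the abrupt-deletion setting the source is replaced by the whole set $S_1=\{v^*_1,\dots,v^*_x\}$, the waves emanating from different $v^*_i$ need not synchronize, and the paper explicitly notes that a node may change to state $\C$ more than once (this is exactly why Lemma~\ref{lem:abrupt-t} exists and why the broadcast bound degrades to $O(\min\{\log n, d(v^*)\})$ rather than $O(1)$). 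So your central count of ``at most $3|S|$ total state changes'' is not justified here, and the charging scheme built on it does not yield the stated bound. (Even if it did hold, combining ``at most $3|S|$ changes'' with ``no two consecutive quiet rounds'' gives roughly $6|S|$, not $3|S|+2$.)

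The paper's actual argument is different and is robust to re-entries: if a node changes to $\C$ in round $t$, it must have a lower-$\pi$ neighbor that changed to $\C$ in round $t-1$, so by induction there is a $\pi$-increasing (hence vertex-distinct) chain of $t$ nodes of $S$ ending at it, giving $t\le|S|$. Thus no node changes to $\C$ after round $|S|$; repeating the chain argument for the $\C\to\R$ phase and then the $\R\to\{\M,\NM\}$ phase pushes the last possible event to round $3|S|+2$. Termination is then guaranteed by the observation that two consecutive rounds without any state change force the algorithm to have finished. Your ``no two consecutive quiet rounds'' observation matches that last step, but you would need to replace the per-node transition count with this per-phase chain-length bound to close the proof.
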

\begin{proof}
Consider a node $v$ that changes to state $\C$ in round $t$. If $t>1$, then there is a node $u$, for which $\pi(u) < \pi(v)$, that changes to $\C$ in round $t-1$. By induction, it follows that there exists a path $(v_1,v_2,\ldots,v_t)$, where $v_1=v^*_i$ for some $1\leq i \leq x$ and $v_t = v$, and in addition $\pi(v_i) < \pi(v_{j+1})$ for all $j<t$. Since the latter implies that the nodes are distinct, we have that $t \leq |S|$, meaning that after round $|S|$ no node changes to $\C$. This gives that after an additional round, nodes begin to change to $\R$. The same argument now gives that no node changes to $\R$ after round $2|S|+1$. After an additional round nodes start changing from $\R$, and therefore, using the same argument again, we have that no node changed its state from $\R$ after round $3|S|+2$.
To complete the proof, we argue that indeed this procedure progresses, hence eventually all nodes are in either $\M$ or $\NM$, with the \MIS invariant holding.
This holds since two consecutive rounds with no state changes implies that the algorithm terminated: After a round with no state change, if there are nodes in state $\C$ then the node with the maximal $\pi$ order among them changes to $\R$. Otherwise, if there are nodes in state $\R$ then the node with the minimal $\pi$ order among them changes to either $\M$ or $\NM$.
\end{proof}
\begin{lem}
\label{lem:abrupt-t}
If $v$ changes from either $\M$ or $\NM$ to $\C$ at rounds $t$, it does not change to neither $\M$ or $\NM$ again before round $3t+1$. Further, each change of $v$ to $\C$ can be associated with a different node $v^*_i$, for some $1\leq i \leq x$.
\end{lem}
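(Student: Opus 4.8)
The plan is to prove Lemma~\ref{lem:abrupt-t} by tracking, for each node that changes to state $\C$, a ``causal path'' back to one of the nodes $v_1^*,\dots,v_x^*$ in $S_1$, and to argue two things: first, that the round index of a $\C$-change is governed by the length of such a path (which gives the timing bound $3t+1$), and second, that distinct $\C$-changes of a fixed node $v$ can be charged to distinct members $v_i^*$ of $S_1$ (which gives the counting bound). The timing part should reuse the argument already appearing in the proof of Lemma~\ref{lem:abrupt:rounds}: if $v$ changes to $\C$ at round $t$, then by unfolding the rule on line~1--2 of Algorithm~\ref{alg:state_protocol} there is a strictly $\pi$-increasing sequence of distinct nodes $(v^*_i = w_1, w_2, \dots, w_t = v)$, each changing to $\C$ one round after its predecessor. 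After $v$ changes to $\C$ it must wait (line~3: at least $2$ rounds, and until no $\pi$-larger neighbor is in $\C$); since no node along the extension of the causal path past $v$ can be in $\C$ later than round $|S|$, and by a mirrored argument for the $\C\to\R$ and $\R\to\{\M,\NM\}$ phases, the earliest $v$ can leave $\R$ is bounded below by something linear in $t$ — I would track the constants carefully to land exactly on $3t+1$ (phase $\C$ ends no earlier than round $t$, the two forced waiting rounds push $\R$ to round $\ge t+2$ style bookkeeping, then the $\R$-drain contributes another factor, summing to $3t+1$).

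For the charging claim, the key observation is the structure already isolated in Lemma~\ref{lem:Ronce}'s proof: $v^*$ (or here, each $v_i^*$) is the ``source'' of a wave, and once a node settles from $\R$ back to $\M$ or $\NM$ it never moves again because at that moment there are no $\C$ nodes in its neighborhood. So each time $v$ re-enters state $\C$, it does so because some lower-$\pi$ neighbor entered $\C$, and tracing that back along a $\pi$-increasing causal path we reach some $v_{i(k)}^*$ for the $k$-th entry. I would argue that two entries of $v$ into $\C$ at rounds $t<t'$ must have causal paths originating at \emph{different} $v_i^*$'s: otherwise, following the abrupt-deletion modification (every $v_i^*$ changes to $\C$ only in round~$1$, once), a single $v_i^*$ would have to supply two temporally separated waves, but $v_i^*$ fires exactly once, and the subgraph of nodes reachable from a single source along $\pi$-increasing $\C$-propagation edges forms a single contiguous episode (no node in it is in $\C$ after round $|S|$, then the region drains), so a second, later activation of $v$ would need a fresh source.

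The hard part will be making the ``different source'' argument airtight, because the $\C$-regions emanating from different $v_i^*$'s can overlap and interact: a node $v$ might be pulled into $\C$ by a neighbor whose own $\C$-status descends from $v_i^*$, while simultaneously a neighbor descending from $v_j^*$ is also in $\C$. I would handle this by defining, for each $\C$-change of $v$, a canonical causal path — e.g., always follow the neighbor that changed to $\C$ earliest, breaking ties by $\pi$ — and then showing that if $v$'s $k$-th and $\ell$-th $\C$-changes ($k\ne\ell$) had canonical paths with the same origin $v_i^*$, one could derive that $v$ did not actually leave $\C$ between them (contradicting that these are distinct changes), using the line~3 rule that $v$ leaves $\C$ only when no $\pi$-larger neighbor is in $\C$ together with the fact that the single wave from $v_i^*$ propagates monotonically in $\pi$. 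This monotonicity is exactly what bounds the number of returns to $\C$ by $x = |S_1| \le d(v^*)$, and combined with the round bound $3|S|+2$ from Lemma~\ref{lem:abrupt:rounds} and a dyadic/layered accounting of how waves can nest, it will also yield the $O(\log n)$ bound claimed in the surrounding text.
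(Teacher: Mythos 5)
Your overall plan --- trace a $\pi$-increasing causal path $(v_1,\dots,v_t)$ with $v_1=v^*_i$ and $v_t=v$, derive the timing bound from wave propagation along it, and charge distinct $\C$-entries of $v$ to distinct sources --- is exactly the paper's approach. However, in both halves the step you explicitly defer is where the whole content of the lemma lives, and your bookkeeping as sketched does not yet produce it. For the timing bound, the factor $3$ does not come from ``two forced waiting rounds plus an $\R$-drain''; it comes from two separate propagations of length $t-1$ along the path itself, in opposite directions. Node $v=v_t$ cannot enter $\R$ before round $t+2$; since $v_{j+1}$ is a $\pi$-larger neighbor of $v_j$ sitting in $\C$, node $v_j$ cannot enter $\R$ until $v_{j+1}$ has, so the $\C\to\R$ transition travels \emph{backwards} down the path and $v_1=v^*_i$ reaches $\R$ no earlier than round $(t+2)+(t-1)=2t+1$, hence settles no earlier than $2t+2$; then the $\R\to\{\M,\NM\}$ transition travels \emph{forwards} (each $v_{j+1}$ waits for $v_j\in I_\pi(v_{j+1})$ to settle), costing another $t-1$ and landing at $3t+1$. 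Your sketch mentions only the waiting rounds, the forward drain, and nodes ``past $v$'', which yields roughly one propagation delay, not two.

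For the charging claim, your instinct that ``$v^*_i$ fires exactly once'' is not by itself enough is correct: a single firing of $v^*_i$ at round $1$ can a priori feed two $\pi$-increasing chains of different lengths that reach $v$ at different rounds, so the canonical-path construction alone will not close the argument. What actually rules this out is the backward $\C\to\R$ propagation applied to the \emph{second} path: if the path $(w_1,\dots,w_{t'})$ causing $v$'s later entry at round $t'$ had $w_1=v^*_i$, then each $w_j$ is blocked from leaving $\C$ while $w_{j+1}$ is in $\C$, so $v^*_i$ is held in state $\C$ continuously from round $1$ until after round $t'$. But the timing bound for the first path shows $v^*_i$ settles before $v$ does, hence before round $t'$. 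The contradiction therefore lives at the source $v^*_i$, not, as you propose, at $v$ (note also that the second path approaches $v$ from $\pi$-smaller neighbors, so the line-3 rule you cite, which concerns $\pi$-larger neighbors, cannot be applied at $v$ directly). Once the argument is routed through the source in this way, the canonical-path machinery and the dyadic accounting become unnecessary for this lemma.
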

\begin{proof}
We prove the lemma by induction on $t$, where the base case for $t=1$ trivially holds, as a node needs to make 3 changes. For $t>1$, as in the previous lemma, there exists a path $(v_1,v_2,\ldots,v_t)$, where $v_1=v^*_i$ for some $1\leq i \leq x$ and $v_t = v$, and in addition $\pi(v_j) < \pi(v_{j+1})$ for all $j<t$.

Let $t_{\R} \geq t+2$ be the first round in which $v=v_t$ changes to $\R$. Notice that until that round, $v_{t-1}$ does not change to $\R$, and the same holds inductively for $v_j$ for all $1\leq j \leq t-1$.
Let $t_{\R}'$ be the first round in which $v_1$ changes to $\R$, and let $t_S$ be the first round in which $v_1$ changes to either $\M$ or $\NM$. It follows that $t_{\R}' \geq t_{\R}+t-1 \geq 2t+1$, and hence $t_S \geq 2t+2$.


Now, let $t_S'$ be the first round in which $v=v_t$ changes to either $\M$ or $\NM$. In round $t_S'-1$ the node $v_{t-1}$ must be in either $\M$ or $\NM$, and inductively we have that at round $t_S'-t+1$, the node $v_1$ is in either $\M$ or $\NM$. This implies that $t_S' - t+ 1 \geq t_S \geq 2t+2$, giving $t_S' \geq 3t+1$ as required.

Further, when $v$ changes its state from $\R$ to $\M$ or $\NM$, the state of $v_i^*$ is already $\M$ or $\NM$. This implies that if $v$ change its state to $\C$ again in time $t'>t$, then it is due to a change in $v_{i'}^*$, for some $i'\neq i$, $1\leq i' \leq x$. (Although it is possible that the path from $v_{i'}^*$ to $v$ goes through $v_i^*$ again.)
This means that each change of $v$ to $\C$ can be associated with a different node $v_i$, for some $1\leq i \leq x$.
\end{proof}

\begin{lem}\label{lem:abrupt}
For an abrupt deletion of a node $v^*$, our algorithm requires in expectation a single adjustment, $O(1)$ rounds, and $O(\min\{\log(n),d(v^*)\})$ broadcasts.
\end{lem}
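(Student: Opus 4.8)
The plan is to bound the three complexity measures separately, reusing the structural lemmas already established for the abrupt-deletion variant. For the adjustment complexity, I would observe that — exactly as in Lemma~\ref{lem:bound_by_S} — only nodes in $S=S(\Gold,\Gnew,\pi,v^*)$ ever change their output, so the number of adjustments is at most $|S|$, which is $1$ in expectation by Theorem~\ref{thm:ES-const}. For the round complexity, Lemma~\ref{lem:abrupt:rounds} already gives that the procedure terminates within $3|S|+2$ rounds, so taking expectations and applying Theorem~\ref{thm:ES-const} yields $O(1)$ expected rounds.

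The real content is the broadcast bound. Here I would count, for each node $v$, the number of times $v$ changes to state $\C$, since by Lemma~\ref{lem:Ronce}-type reasoning each visit to $\C$ is followed by at most one change to $\R$ and one change back to $\M/\NM$, so the total number of broadcasts is within a constant factor of $\sum_v (\#\text{ of times } v \text{ enters } \C)$. By Lemma~\ref{lem:abrupt-t}, every entry of $v$ into $\C$ can be charged to a \emph{distinct} index $i\in\{1,\dots,x\}$, i.e.\ to a distinct node $v^*_i\in S_1$; hence the number of times $v$ enters $\C$ is at most $x=|S_1|\le d(v^*)$. Summing over the at most $|S|$ nodes that ever change state gives at most $O(|S|\cdot d(v^*))$ broadcasts under the $d(v^*)$ branch. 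Separately, Lemma~\ref{lem:abrupt-t} gives that a node entering $\C$ at round $t$ does not return to $M/NM$ before round $3t+1$; combined with the termination bound of $O(|S|)$ rounds from Lemma~\ref{lem:abrupt:rounds}, a node can enter $\C$ only at rounds $t$ with $3t+1 = O(|S|)$, i.e.\ at most $O(\log|S|)$ distinct rounds — wait, that geometric-doubling argument needs care, so instead I would argue that the $k$-th time $v$ enters $\C$ happens at a round that is at least roughly tripled relative to the previous entry, so after $O(\log(\text{total rounds}))=O(\log|S|)=O(\log n)$ entries the round index exceeds the termination time $3|S|+2$, a contradiction. This caps the per-node entry count at $O(\log n)$, hence $O(|S|\log n)$ broadcasts overall. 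Taking the minimum of the two bounds and then the expectation, using $\E[|S|]\le 1$ and that $|S|\le n$ deterministically, gives the claimed $O(\min\{\log n, d(v^*)\})$ expected broadcast complexity.

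The step I expect to be the main obstacle is making the expectation argument for the broadcast count fully rigorous: the bound I get pointwise is $O(|S|\cdot\min\{\log n, d(v^*)\})$, and I need to turn this into $O(\min\{\log n,d(v^*)\})$ in expectation. The quantity $d(v^*)$ is a fixed parameter of the topology change (not random), so it pulls out of the expectation directly; but $\log n$ is likewise fixed, and $|S|$ is the random quantity with $\E[|S|]\le 1$. Thus $\E[\text{broadcasts}] \le \min\{\log n, d(v^*)\}\cdot O(\E[|S|]) = O(\min\{\log n, d(v^*)\})$, which is exactly what we want — the key point being that $\min\{\log n,d(v^*)\}$ is deterministic and so factors out cleanly. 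I would also double-check the edge case where the \MIS invariant still holds after the deletion (so $S=\emptyset$ and nothing happens, trivially within all bounds), and confirm that the first-round modification — having every $v^*_i\in S_1$ change to $\C$ simultaneously instead of $v^*$ broadcasting — does not create state-$\C$ nodes outside $S$, which was already asserted to be straightforward.
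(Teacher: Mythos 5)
Your proposal is correct and follows essentially the same route as the paper: adjustments bounded by $|S|$, rounds by Lemma~\ref{lem:abrupt:rounds}, and broadcasts by combining the two per-node caps on entries into state $\C$ from Lemma~\ref{lem:abrupt-t} (the tripling of entry times against the $3|S|+2$ termination bound, and the charging to distinct $v^*_i$), then pulling the deterministic factor $\min\{\log n, d(v^*)\}$ out of the expectation over $|S|$. Your explicit justification of that last expectation step is in fact slightly more careful than the paper's one-line remark.
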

\begin{proof}
Now, the adjustment complexity is bounded by $|S|$ and thus by Theorem~\ref{thm:ES-const} it is $1$ in expectation. By Lemma~\ref{lem:abrupt:rounds} the round complexity is $O(|S|)$, and thus by Theorem~\ref{thm:ES-const} it is $O(1)$ in expectation. Finally, as a corollary from Lemma~\ref{lem:abrupt-t} we get that the sequence of rounds in which a node $v$ has changed to $\C$ is $t_1,t_2,\ldots, t_r$ with $t_1 \geq 1$, $t_{i+1} \geq 3t_i+1$ and by Lemma~\ref{lem:abrupt:rounds} we have also $t_r < 3|S|+1$. It follows that $r \leq \log_3(O(|S|))$. Moreover, since by Lemma~\ref{lem:abrupt-t} every change to $C$ by a node $v$ can be accounted to a different node $v^*_i$, 
we have that $r \leq x \leq d(v^*)$. This gives that the total number of broadcasts is at most $O(|S|\min\{\log(|S|), d(v^*)\})$. The claim immediately follows as since $\E[|S|] \leq 1$ (Theorem~\ref{thm:ES-const}) and $|S| \leq n$.
\end{proof}

\section{History Independence}
In this section, we define and discuss the \emph{history independence} property.


\vspace{-0.3cm}
\paragraph{Motivation:}
In many well known problems, in addition to computing a feasible solution, it is sometimes required to compute an optimal solution with respect to a given objective function. For example, one may wish to find an MIS that has a maximal cardinality. Usually, obtaining optimal solutions, or even good approximate solutions, is NP-hard and therefore, we cannot expect to obtain such solutions in the distributed model. Furthermore, typically such optimization algorithms are tailored to a specific objective function, and hence it is required to handle each objective separately. Moreover, in a dynamic setting, it is more cumbersome to analyze the guarantees for the value of the objective function.

Therefore, although we do not wish to consider any specific objective function in the problem description, we do wish that the adversary will lack the ability to choose a feasible solution as she pleases (in this case we may assume she chooses the worst solution). In fact, we require that the adversary will even not be able to \emph{bias} the output of the algorithm towards any specific solution. Formally, we define our requirement as follows.

\begin{defn}
\label{def:history}
Let $A$ be an algorithm for maintaining a combinatorial graph structure $P$ in a dynamic distributed setting. We say that $A$ is \emph{history independent} if given a graph $G$, the output $P$ of $A$ is a random variable whose distribution depends only on $G$, and does not depend on the history of topology changes that constructed $G$.
\end{defn}

\vspace{-0.4cm}
\paragraph{Composability:}
Another advantage of history independent algorithms is that they compose nicely. For example, given a history independent algorithm $A$ for MIS, we can simulate $A$ on the line graph $L(G)$ for obtaining a history independent algorithm $B$ for maximal matching. Alternatively, we can simulate $A$ on a graph $G'=f(G)$ in which every node $v$ in $G$ corresponds to a clique of size $\Delta+1$ in $G'$, and every edge in $G$ corresponds to a matching between the appropriate cliques. This standard reduction by~\cite{Luby86} is known to give an algorithm $C$ for $(\Delta+1)$-coloring in $G$, and since $A$ is history independent then so is $C$. We note that performing the above simulations is non-trivial, as the simple topological changes in $G$ translate into more complex changes in $L(G)$ or $G'$, yet, the challenges are only technical and require no additional insights, and hence we omit the details.

\paragraph{Examples:}
It is straightforward to see that Algorithm~\ref{alg:state_protocol} is a history independent MIS algorithm, because its output for any graph $G$ is identical to the output of the greedy sequential algorithm on $G$, regardless of the topology changes that resulted in $G$. To further exemplify what this important property gives, consider the following examples, in which we compare executions of history independent algorithms to worst case solutions. Although history dependent algorithms do not necessarily yield the worst solution, the \emph{natural} history dependent algorithm indeed yields the worst solution in these examples. Here we think of the natural algorithm as the greedy algorithm that gives every new node or edge the best value that is possible without making any global changes. For this natural algorithm, one can easily verify that for any feasible output there is a pattern of topology changes that can force the algorithm to produce it.


\textbf{Example: MIS in a Star.}
Assume that an adversary controls the topology changes in the graph and chooses them so that a graph $G_{star}$ is created, where $G_{star}$ is a star on $n$ nodes. Since our MIS algorithm simulates random greedy, there is a probability of $1/n$ that the center of the star has the lowest order among all nodes, in which case it is the only node in the MIS. With the remaining probability of $1-1/n$, a different node has the lowest order, which results in the MIS being all nodes except the center, which is the largest MIS possible in $G_{star}$. The expected size of the resulting MIS is therefore linear in $n$, implying that it is within a constant factor of the size of the \emph{maximum} independent set (maximal cardinality independent set). For comparison, recall that the worst-case MIS in a star is the center alone, and its size is $1$.

\textbf{Example 2: Maximal matching of many $3$-paths.} Assume an adversary constructs a graph $G_{3paths}$, which contains $n/4$ disjoint paths of length $3$ edges. Since our maximal matching algorithm simulates a random greedy MIS algorithm on the line graph $L(G_{3paths})$, we have that for every $3$-path independently, with probability $2/3$ its matching is of size $2$ and with probability $1/3$ its matching is of size $1$. Therefore, the expected size of the matching we obtain is $5n/12$. For comparison, notice that the worst-case maximal matching in $G_{3paths}$ has size $n/4$.

\textbf{Example 3: Coloring.}
Regarding coloring algorithms, there is much more room for improvement upon using the standard reduction with our MIS algorithm. As in our MIS algorithm, we would have liked to have an algorithm that simulates the sequential random greedy algorithm for coloring as well, since, for example, it would imply the following.

Assume that an adversary controls the topology changes in the graph and chooses them so that a graph $G=(V,E)$ is created, where $G$ is a bipartite graph on the set of nodes $V= L\cup R$, for $L=\{u_1,\dots,u_{n/2}\}$ and $R=\{v_1,\dots,v_{n/2}\}$. In $E$ we have an edge between the nodes $u_i$ and $v_j$ for every $i \neq j$. Thus, $G$ is a complete bipartite graph minus a perfect matching. If we run a random greedy coloring algorithm, then the first node, say $u_i$ gets the color $1$ when being inserted into the graph. If the next inserted node after $u_i$ is $v_j$, for any $j\neq i$, then it gets the color $2$, and afterwards every node in $L$ gets color $1$ and every node in $R$ gets color $2$. If the next inserted node after $u_i$ is $u_j$, for any $j\neq i$, then it gets the color $1$, and afterwards every node in $L$ gets color $1$ and every node in $R$ gets color $2$. That is, with probability $1-1/n$, we get an optimal $2$-coloring. With the remaining $1/n$ probability we might get a coloring as bad as linear in $\Delta$. This gives that in expectation, we get a coloring whose palette size is a constant factor away from optimal.

We can, of course, simulate the random greedy sequential coloring, but the problem is that we pay a cost of $2^\Delta$ adjustments. The way we can do this is to maintain that every color $i$ is an MIS in the graph $G - \{C_{j}\}_{1\leq j < i}$, where $C_j$ is the set of nodes of color $j$. However, a change to a node may cause each node in $S$ to induce $2$ sources of sets for the next color, which would result in a total set of $2^\Delta$ adjustments. Notice that this is much worse compared to what a naive dynamic distributed coloring algorithm would give. It is a curious question whether we can indeed enjoy both worlds here, or whether any lower bound can be proved.

~\\
The above examples illustrate how a property of the output of a history independent algorithm on a graph $G$ can be analyzed as a simple combinatorial problem. This can lead to better guarantees compared to only being able to assume the worst case.

\section{Discussion}
This paper studies computing an MIS in a distributed dynamic setting. The strength of our analysis lies in obtaining that for an algorithm that simulates the sequential random greedy algorithm, the size of the set of nodes that need to change their output is in expectation $1$. This brings the locality of the fundamental MIS problem to its most powerful setting.
%


We believe our work sets the ground for much more research in this crucial setting. Below we discuss some open problems that arise from our work.

An immediate open question is whether our analysis can be extended to cope with more than a single failure at a time.
Second, there are many additional problems that can be addressed in the dynamic distributed setting, especially in the synchronous case. We believe that our contribution can find applications in solving many additional dynamic distributed tasks.

A major open question is whether our techniques can be adapted to sequential dynamic graph algorithms, which constitutes a major area of research in the sequential setting~\cite{Frederickson85,HenzingerK99,HenzingerK95,EppsteinGIN97,EvenS81,Tarjan75,HolmLT98,Thorup00,KapronKM13,RodittyZ12,RodittyZ11,BhattacharyaHI15,DemetrescuI00,HenzingerKN13,BernsteinS15}. A formal definition and description of typical problems can be found in, e.g., ~\cite{Demetrescu2010}. Notice that our algorithms are \emph{fully dynamic}, which means that they handle both insertions and deletions (of edges and nodes). Although our template for finding an MIS can be easily implemented in a sequential dynamic setting, it would come with a cost of at least $O(\Delta)$ for the update complexity in a direct implementation. This is because we would have to access neighbors of the set of nodes analyzed in Theorem~\ref{thm:ES-const}. Our distributed implementation avoids this by having them simply not respond since they do not need to change their output, and hence they do not contribute to the communication. Nevertheless, we believe that our approach may be useful for designing an MIS algorithm for the dynamic sequential setting, and leave this for future research.

\bibliographystyle{abbrv}
\bibliography{references}

\begin{thebibliography}{10}

\bibitem{AilonCN08}
N.~Ailon, M.~Charikar, and A.~Newman.
\newblock Aggregating inconsistent information: Ranking and clustering.
\newblock {\em J. {ACM}}, 55(5), 2008.

\bibitem{AlonBI86}
N.~Alon, L.~Babai, and A.~Itai.
\newblock A fast and simple randomized parallel algorithm for the maximal
  independent set problem.
\newblock {\em J. Algorithms}, 7(4):567--583, 1986.

\bibitem{AvinKL08}
C.~Avin, M.~Kouck{\'{y}}, and Z.~Lotker.
\newblock How to explore a fast-changing world (cover time of a simple random
  walk on evolving graphs).
\newblock In {\em Automata, Languages and Programming, 35th International
  Colloquium, {ICALP} 2008, Reykjavik, Iceland, July 7-11, 2008, Proceedings,
  Part {I:} Tack {A:} Algorithms, Automata, Complexity, and Games}, pages
  121--132, 2008.

\bibitem{AwerbuchCK90}
B.~Awerbuch, I.~Cidon, and S.~Kutten.
\newblock Communication-optimal maintenance of replicated information.
\newblock In {\em 31st Annual Symposium on Foundations of Computer Science, St.
  Louis, Missouri, USA, October 22-24, 1990, Volume {II}}, pages 492--502,
  1990.

\bibitem{AwerbuchS88}
B.~Awerbuch and M.~Sipser.
\newblock Dynamic networks are as fast as static networks (preliminary
  version).
\newblock In {\em 29th Annual Symposium on Foundations of Computer Science,
  White Plains, New York, USA, 24-26 October 1988}, pages 206--220, 1988.

\bibitem{AwerbuchV91}
B.~Awerbuch and G.~Varghese.
\newblock Distributed program checking: a paradigm for building
  self-stabilizing distributed protocols (extended abstract).
\newblock In {\em 32nd Annual Symposium on Foundations of Computer Science, San
  Juan, Puerto Rico, 1-4 October 1991}, pages 258--267, 1991.

\bibitem{AzarKP2010}
Y.~Azar, S.~Kutten, and B.~Patt-Shamir.
\newblock Distributed error confinement.
\newblock {\em ACM Trans. Algorithms}, 6(3):48:1--48:23, July 2010.

\bibitem{2013Barenboim}
L.~Barenboim and M.~Elkin.
\newblock {\em Distributed Graph Coloring: Fundamentals and Recent
  Developments}.
\newblock Synthesis Lectures on Distributed Computing Theory. Morgan {\&}
  Claypool Publishers, 2013.

\bibitem{BarenboimEK14}
L.~Barenboim, M.~Elkin, and F.~Kuhn.
\newblock Distributed (delta+1)-coloring in linear (in delta) time.
\newblock {\em {SIAM} J. Comput.}, 43(1):72--95, 2014.

\bibitem{BarenboimEPS12}
L.~Barenboim, M.~Elkin, S.~Pettie, and J.~Schneider.
\newblock The locality of distributed symmetry breaking.
\newblock In {\em 53rd Annual {IEEE} Symposium on Foundations of Computer
  Science, {FOCS} 2012, New Brunswick, NJ, USA, October 20-23, 2012}, pages
  321--330, 2012.

\bibitem{BaswanaKS12}
S.~Baswana, S.~Khurana, and S.~Sarkar.
\newblock Fully dynamic randomized algorithms for graph spanners.
\newblock {\em {ACM} Transactions on Algorithms}, 8(4):35, 2012.

\bibitem{BernsteinS15}
A.~Bernstein and C.~Stein.
\newblock Fully dynamic matching in bipartite graphs.
\newblock In {\em Automata, Languages, and Programming - 42nd International
  Colloquium, {ICALP} 2015, Kyoto, Japan, July 6-10, 2015, Proceedings, Part
  {I}}, pages 167--179, 2015.

\bibitem{BhattacharyaHI15}
S.~Bhattacharya, M.~Henzinger, and G.~F. Italiano.
\newblock Deterministic fully dynamic data structures for vertex cover and
  matching.
\newblock In {\em Proceedings of the Twenty-Sixth Annual {ACM-SIAM} Symposium
  on Discrete Algorithms, {SODA} 2015, San Diego, CA, USA, January 4-6, 2015},
  pages 785--804, 2015.

\bibitem{CiceroneSFN03}
S.~Cicerone, G.~D. Stefano, D.~Frigioni, and U.~Nanni.
\newblock A fully dynamic algorithm for distributed shortest paths.
\newblock {\em Theor. Comput. Sci.}, 297(1-3):83--102, 2003.

\bibitem{Demetrescu2010}
C.~Demetrescu, D.~Eppstein, Z.~Galil, and G.~F. Italiano.
\newblock Dynamic graph algorithms.
\newblock In M.~J. Atallah and M.~Blanton, editors, {\em Algorithms and Theory
  of Computation Handbook}, chapter~9. Chapman \& Hall/CRC, 2010.

\bibitem{DemetrescuI00}
C.~Demetrescu and G.~F. Italiano.
\newblock Fully dynamic transitive closure: Breaking through the
  o(n\({}^{\mbox{2}}\)) barrier.
\newblock In {\em 41st Annual Symposium on Foundations of Computer Science,
  {FOCS} 2000, 12-14 November 2000, Redondo Beach, California, {USA}}, pages
  381--389, 2000.

\bibitem{Dijkstra74}
E.~W. Dijkstra.
\newblock Self-stabilizing systems in spite of distributed control.
\newblock {\em Commun. {ACM}}, 17(11):643--644, 1974.

\bibitem{Dolev2000}
S.~Dolev.
\newblock {\em Self-Stabilization}.
\newblock {MIT} Press, 2000.

\bibitem{DolevH97}
S.~Dolev and T.~Herman.
\newblock Superstabilizing protocols for dynamic distributed systems.
\newblock {\em Chicago J. Theor. Comput. Sci.}, 1997, 1997.

\bibitem{DolevT09}
S.~Dolev and N.~Tzachar.
\newblock Empire of colonies: Self-stabilizing and self-organizing distributed
  algorithm.
\newblock {\em Theor. Comput. Sci.}, 410(6-7):514--532, 2009.

\bibitem{Elkin07}
M.~Elkin.
\newblock A near-optimal distributed fully dynamic algorithm for maintaining
  sparse spanners.
\newblock In {\em Proceedings of the Twenty-Sixth Annual {ACM} Symposium on
  Principles of Distributed Computing, {PODC} 2007, Portland, Oregon, USA,
  August 12-15, 2007}, pages 185--194, 2007.

\bibitem{EppsteinGIN97}
D.~Eppstein, Z.~Galil, G.~F. Italiano, and A.~Nissenzweig.
\newblock Sparsification - a technique for speeding up dynamic graph
  algorithms.
\newblock {\em J. {ACM}}, 44(5):669--696, 1997.

\bibitem{EvenS81}
S.~Even and Y.~Shiloach.
\newblock An on-line edge-deletion problem.
\newblock {\em J. {ACM}}, 28(1):1--4, 1981.

\bibitem{Frederickson85}
G.~N. Frederickson.
\newblock Data structures for on-line updating of minimum spanning trees, with
  applications.
\newblock {\em {SIAM} J. Comput.}, 14(4):781--798, 1985.

\bibitem{Ghaffari15}
M.~Ghaffari.
\newblock Towards an optimal distributed algorithm for maximal independent set.
\newblock {\em CoRR}, abs/1506.05093, 2015.

\bibitem{GuellatiK10}
N.~Guellati and H.~Kheddouci.
\newblock A survey on self-stabilizing algorithms for independence, domination,
  coloring, and matching in graphs.
\newblock {\em J. Parallel Distrib. Comput.}, 70(4):406--415, 2010.

\bibitem{HenzingerKN13}
M.~Henzinger, S.~Krinninger, and D.~Nanongkai.
\newblock Dynamic approximate all-pairs shortest paths: Breaking the o(mn)
  barrier and derandomization.
\newblock In {\em 54th Annual {IEEE} Symposium on Foundations of Computer
  Science, {FOCS} 2013, 26-29 October, 2013, Berkeley, CA, {USA}}, pages
  538--547, 2013.

\bibitem{HenzingerK95}
M.~R. Henzinger and V.~King.
\newblock Fully dynamic biconnectivity and transitive closure.
\newblock In {\em 36th Annual Symposium on Foundations of Computer Science,
  Milwaukee, Wisconsin, 23-25 October 1995}, pages 664--672, 1995.

\bibitem{HenzingerK99}
M.~R. Henzinger and V.~King.
\newblock Randomized fully dynamic graph algorithms with polylogarithmic time
  per operation.
\newblock {\em J. {ACM}}, 46(4):502--516, 1999.

\bibitem{HolmLT98}
J.~Holm, K.~de~Lichtenberg, and M.~Thorup.
\newblock Poly-logarithmic deterministic fully-dynamic algorithms for
  connectivity, minimum spanning tree, 2-edge, and biconnectivity.
\newblock In {\em Proceedings of the Thirtieth Annual {ACM} Symposium on the
  Theory of Computing, Dallas, Texas, USA, May 23-26, 1998}, pages 79--89,
  1998.

\bibitem{IsraelI86}
A.~Israeli and A.~Itai.
\newblock A fast and simple randomized parallel algorithm for maximal matching.
\newblock {\em Inf. Process. Lett.}, 22(2):77--80, 1986.

\bibitem{Italiano91}
G.~F. Italiano.
\newblock Distributed algorithms for updating shortest paths (extended
  abstract).
\newblock In {\em Distributed Algorithms, 5th International Workshop, {WDAG}
  '91, Delphi, Greece, October 7-9, 1991, Proceedings}, pages 200--211, 1991.

\bibitem{KapronKM13}
B.~M. Kapron, V.~King, and B.~Mountjoy.
\newblock Dynamic graph connectivity in polylogarithmic worst case time.
\newblock In {\em Proceedings of the Twenty-Fourth Annual {ACM-SIAM} Symposium
  on Discrete Algorithms, {SODA} 2013, New Orleans, Louisiana, USA, January
  6-8, 2013}, pages 1131--1142, 2013.

\bibitem{KonigW13}
M.~K{\"{o}}nig and R.~Wattenhofer.
\newblock On local fixing.
\newblock In {\em Principles of Distributed Systems - 17th International
  Conference, {OPODIS} 2013, Nice, France, December 16-18, 2013. Proceedings},
  pages 191--205, 2013.

\bibitem{KormanP03}
A.~Korman and D.~Peleg.
\newblock Labeling schemes for weighted dynamic trees.
\newblock In {\em Automata, Languages and Programming, 30th International
  Colloquium, {ICALP} 2003, Eindhoven, The Netherlands, June 30 - July 4, 2003.
  Proceedings}, pages 369--383, 2003.

\bibitem{KuhnLO10}
F.~Kuhn, N.~A. Lynch, and R.~Oshman.
\newblock Distributed computation in dynamic networks.
\newblock In {\em Proceedings of the 42nd {ACM} Symposium on Theory of
  Computing, {STOC} 2010, Cambridge, Massachusetts, USA, 5-8 June 2010}, pages
  513--522, 2010.

\bibitem{KuhnMW04}
F.~Kuhn, T.~Moscibroda, and R.~Wattenhofer.
\newblock What cannot be computed locally!
\newblock In {\em Proceedings of the Twenty-Third Annual {ACM} Symposium on
  Principles of Distributed Computing, {PODC} 2004, St. John's, Newfoundland,
  Canada, July 25-28, 2004}, pages 300--309, 2004.

\bibitem{KuttenP99}
S.~Kutten and D.~Peleg.
\newblock Fault-local distributed mending.
\newblock {\em J. Algorithms}, 30(1):144--165, 1999.

\bibitem{KuttenP00}
S.~Kutten and D.~Peleg.
\newblock Tight fault locality.
\newblock {\em {SIAM} J. Comput.}, 30(1):247--268, 2000.

\bibitem{LenzenSW09}
C.~Lenzen, J.~Suomela, and R.~Wattenhofer.
\newblock Local algorithms: Self-stabilization on speed.
\newblock In {\em Stabilization, Safety, and Security of Distributed Systems,
  11th International Symposium, {SSS} 2009, Lyon, France, November 3-6, 2009.
  Proceedings}, pages 17--34, 2009.

\bibitem{LeviRY15}
R.~Levi, R.~Rubinfeld, and A.~Yodpinyanee.
\newblock Local computation algorithms for graphs of non-constant degrees.
\newblock {\em CoRR}, abs/1502.04022, 2015.

\bibitem{Linial92}
N.~Linial.
\newblock Locality in distributed graph algorithms.
\newblock {\em {SIAM} J. Comput.}, 21(1):193--201, 1992.

\bibitem{Luby86}
M.~Luby.
\newblock A simple parallel algorithm for the maximal independent set problem.
\newblock {\em {SIAM} J. Comput.}, 15(4):1036--1053, 1986.

\bibitem{MansourRVX12}
Y.~Mansour, A.~Rubinstein, S.~Vardi, and N.~Xie.
\newblock Converting online algorithms to local computation algorithms.
\newblock In {\em Automata, Languages, and Programming - 39th International
  Colloquium, {ICALP} 2012, Warwick, UK, July 9-13, 2012, Proceedings, Part
  {I}}, pages 653--664, 2012.

\bibitem{MetivierRSZ11}
Y.~M{\'{e}}tivier, J.~M. Robson, N.~Saheb{-}Djahromi, and A.~Zemmari.
\newblock An optimal bit complexity randomized distributed {MIS} algorithm.
\newblock {\em Distributed Computing}, 23(5-6):331--340, 2011.

\bibitem{NguyenO08}
H.~N. Nguyen and K.~Onak.
\newblock Constant-time approximation algorithms via local improvements.
\newblock In {\em 49th Annual {IEEE} Symposium on Foundations of Computer
  Science, {FOCS} 2008, October 25-28, 2008, Philadelphia, PA, {USA}}, pages
  327--336, 2008.

\bibitem{Peleg2000}
D.~Peleg.
\newblock {\em Distributed Computing: A Locality-sensitive Approach}.
\newblock Society for Industrial and Applied Mathematics, Philadelphia, PA,
  USA, 2000.

\bibitem{RodittyZ11}
L.~Roditty and U.~Zwick.
\newblock On dynamic shortest paths problems.
\newblock {\em Algorithmica}, 61(2):389--401, 2011.

\bibitem{RodittyZ12}
L.~Roditty and U.~Zwick.
\newblock Dynamic approximate all-pairs shortest paths in undirected graphs.
\newblock {\em {SIAM} J. Comput.}, 41(3):670--683, 2012.

\bibitem{Schneider93}
M.~Schneider.
\newblock Self-stabilization.
\newblock {\em {ACM} Comput. Surv.}, 25(1):45--67, 1993.

\bibitem{Tarjan75}
R.~E. Tarjan.
\newblock Efficiency of a good but not linear set union algorithm.
\newblock {\em J. {ACM}}, 22(2):215--225, 1975.

\bibitem{Thorup00}
M.~Thorup.
\newblock Near-optimal fully-dynamic graph connectivity.
\newblock In {\em Proceedings of the Thirty-Second Annual {ACM} Symposium on
  Theory of Computing, May 21-23, 2000, Portland, OR, {USA}}, pages 343--350,
  2000.

\bibitem{Turau07}
V.~Turau.
\newblock Linear self-stabilizing algorithms for the independent and dominating
  set problems using an unfair distributed scheduler.
\newblock {\em Inf. Process. Lett.}, 103(3):88--93, 2007.

\bibitem{YoshidaYI12}
Y.~Yoshida, M.~Yamamoto, and H.~Ito.
\newblock Improved constant-time approximation algorithms for maximum matchings
  and other optimization problems.
\newblock {\em {SIAM} J. Comput.}, 41(4):1074--1093, 2012.

\end{thebibliography}



\end{document}